\newtheorem{thm}{Theorem}[section]
\newtheorem{prop}[thm]{Proposition}
\newtheorem{lemma}[thm]{Lemma}
\newtheorem{cor}[thm]{Corollary}
\newtheorem{claim}[thm]{Claim}
\newtheorem{defi}[thm]{Definition}
\newtheorem{cnst}[thm]{Construction}
\newtheorem{obs}[thm]{Observation}
\newcommand{\ceil}[1]{\left \lceil #1 \right \rceil}
\newcommand{\floor}[1]{\left \lfloor #1 \right \rfloor}
\newcommand{\Fp}{\mathbb{F}_p}
\newcommand{\cC}{\mathcal{C}}
\newcommand{\Fpm}{\mathbb{F}_{p^m}}
\newcommand{\lcm}{\textup{lcm}}
\newcommand{\Leak}{\textup{Leak}_{\Theta, \tau}}
\newcommand\blfootnote[1]{%
	\begingroup
	\renewcommand\thefootnote{}\footnote{#1}%
	\addtocounter{footnote}{-1}%
	\endgroup
}
\begin{document}
	\title{Nonlinear Repair  of Reed-Solomon Codes.}
	
	\author{
		Roni Con\thanks{Department of Computer Science, Tel Aviv University, Tel Aviv, Israel. Email: roni.con93@gmail.com} \and Itzhak Tamo \thanks{Department of Electrical Engineering-Systems, Tel Aviv University, Tel Aviv, Israel. Email: tamo@tauex.tau.ac.il}
	} 
	\date{}
	\maketitle

	\begin{abstract}
		The problem of repairing linear codes and, in particular, Reed Solomon (RS) codes has attracted a lot of attention in recent years due to their extreme importance to distributed storage systems. In this problem, a failed code symbol (node) needs to be repaired   by downloading as little information as possible from
		a subset of the remaining nodes.  By now, there are examples of RS codes that have efficient repair schemes, and some even attain the cut-set bound. However, these schemes fall short in several aspects; they require a considerable field extension degree. They do not provide any nontrivial repair scheme over prime fields. Lastly, they are all linear repairs, i.e., the computed functions are linear over the base field.  Motivated by these and by a question raised in \cite{guruswami2017repairing} on the power of nonlinear repair schemes, we study the problem of nonlinear repair schemes of RS codes.
		
		Our main results are
		the first nonlinear repair scheme of RS codes with asymptotically optimal repair bandwidth (asymptotically matching the cut-set bound). This is the first example of a nonlinear repair scheme of any code and also the first example that a nonlinear repair scheme can outperform all linear ones.
		Lastly, we show that the cut-set bound for RS codes is not tight over prime fields by proving a tighter bound, using additive combinatorics ideas. 
		\blfootnote{This work was partially supported by the European Research Council (ERC grant number 852953) and by the Israel Science Foundation (ISF grant number 1030/15).}
		
		
		
	\end{abstract}
	
	\newpage
	\tableofcontents
	\newpage
	
	\section{Introduction}
	A distributed storage system is a form of a computer network that stores information reliably across multiple storage devices while introducing redundancy (for increased reliability) in the form of an erasure-correcting code. Typically, a file that needs to be stored in the system is cut into $k$ data fragments that form the input to an erasure correcting code. Then,   $n-k$ new redundant fragments are computed, and the total $n$ fragments are stored across distinct storage devices (nodes). 
	
	These days, large-scale distributed storage systems face many new challenges that rose recently in the information era, characterized by    
	enormous data being generated daily. One such challenge is the problem
	of efficiently (measured in terms of system resources) repairing a single failed node \cite[Section 6.6]{rashmi2014hitchhiker}.  
	In such a scenario, the lost data stored on the failed node  needs to be recovered. Since an erasure-correcting code is employed, this amounts to  decoding a single codeword erasure. This problem called the \emph{(exact) repair problem} was first considered in the seminal paper of Dimakis et al. \cite{dimakis2010network} and has since witnessed an explosive amount of research.  
	
	A  \emph{repair scheme} for the repair problem is a method  to recover a failed node's data from the data stored on the remaining non failed ones. 
	Assuming the storage system employs an $[n,k]$ maximum distance separable (MDS) code, which allows the original file to be recovered from any $k$  nodes, then  a \emph{trivial scheme} would be to contact any $k$ of the remaining $n-1$ nodes, recover the original file and thus also recover the lost data. 
	However, this scheme is too costly in terms of the amount of information transmitted across the network, termed as the \emph{repair bandwidth}. This parameter is the main figure of merit that one wishes to optimize when constructing efficient repair schemes. Many code constructions and ingenious repair schemes were constructed over the years since the introduction of the repair problem 
	\cite{dimakis2010network, el2010fractional,goparaju2013data,papailiopoulos2013repair,tamo2012zigzag,wang2016explicit,rashmi2011optimal,ye2017explicithighrate,ye2017explicit,goparaju2017minimum}.
	However, despite considerable progress, there are several open questions and challenges left to overcome. In this paper, we address and resolve some of them.
	
	

	\subsection{The setup and previous results}
	Before formally defining and discussing RS codes' repair problem, we begin with basic definitions of linear  codes.  
	
	An $[n,k]$ \emph{array code} $\cC$, with \emph{subpacketization} $L$ over a finite field $\mathbb{F}$ is a linear subspace $\cC\subseteq \mathbb{F}^{L \times n}$ over $\mathbb{F}$ and dimension $k L$. The \emph{length} and the \emph{rate} of the code are  $n$ and $k/n$, respectively, and the elements of $\cC$ are called \emph{codewords}. The  $i$th  (code) symbol of a codeword   is the $i$th  column of the codeword, which is  a vector of length $L$ over $\mathbb{F}$. An $[n,k]$ array code is called MDS if each codeword is uniquely determined by any $k$ of its symbols. Lastly, \emph{scalar codes} (which are the more common mathematical object when one refers to a code) are   array codes  with $L=1$. Hence, any scalar code is also an array code. Furthermore, a scalar code $\cC$ over a field extension $\mathbb{E}$ with a subfield $\mathbb{F}$, where $[\mathbb{E}:\mathbb{F}]=L$ can be viewed as an array code over $\mathbb{F}$ and subpacketization $L$, simply by expanding each symbol of the code to a column vector of length $L$ over $\mathbb{F}$, according to some basis of $\mathbb{E}$ over $\mathbb{F}$.    
	
	In a distributed storage system that employs an array code of length $n$, it is assumed that each code symbol resides on a distinct storage device (node) to increase the data reliability in case of a node malfunctioning. As already mentioned, the most common scenario in such systems is a single node's failure, which is a single symbol erasure in the coding theory terminology. Such an event triggers a repair scheme whose goal is to recover the erased symbol by receiving information from a subset of the remaining $n-1$ nodes, called \emph{helper nodes}. We mention that we interchangeably use the term node and a symbol of the code in the sequel and say that node $i$ holds (stores) the $i$th symbol.  The figure of merit considered in this problem is the total incurred bandwidth across the network due to the repair scheme. In other words, how many bits the helper nodes need to transmit to repair the failed node.   
	This quantity is called the \emph{repair bandwidth} whose formal definition  is given next. 
	\begin{defi} [Repair bandwidth] \label{def:repair-bandwidth}
		Let $\cC$ be an $[n,k]$ MDS array code. 
		For $i\in [n]$ and a subset $D\subseteq [n]\setminus \{i\}, \left| D \right|=d \geq k$ of helper nodes, define $N(i, D)$ as the smallest number of bits that need to be transmitted  from the helper nodes $D$ in order to repair the failed node $i$. The repair bandwidth of the code $\cC$ with $d$ helper nodes is defined as  
		\[
		\max_{\substack{i\in [n]\\ D\subseteq [n]\setminus\{i\}, \left|D\right| = d}} N(i,D) \;.
		\]
	\end{defi}
	Note that the transmitted information from each helper node can be any function of the symbol it holds.  Next, we shall state the well-known lower bound on the repair bandwidth, called the \emph{cut-set bound}, derived by Dimakis et al. in the seminal work \cite{dimakis2010network}.
	\begin{thm} \cite{dimakis2010network} \label{thm:cut-set-bound}
		Let $\cC$ be an $[n,k]$ MDS array code with subpacketization $L$ over a field $\mathbb{F}$, then for any $i\in [n]$ and any set of $d$ helper nodes  $D\subseteq [n]\setminus \{i\}$ 
		\begin{equation} \label{eq:cut-set}
		N(i, D) \geq \frac{d L \log(|\mathbb{F}|)}{d+1 -k}\;.
		\end{equation}
	\end{thm}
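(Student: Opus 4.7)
The plan is to give an information-theoretic proof that leverages the MDS structure of $\cC$. I endow the codewords of $\cC$ with the uniform distribution and let $X_j\in\mathbb{F}^\ell$ denote the (now random) symbol at node $j$. Because $\cC$ is MDS with $k\ell$ information symbols, the projection onto any $k$ coordinates is a bijection $\cC\to\mathbb{F}^{k\ell}$, so for every $k$ distinct indices $j_1,\dots,j_k$ the tuple $(X_{j_1},\dots,X_{j_k})$ is uniform on $\mathbb{F}^{k\ell}$; in particular $H(X_i)=\ell\log|\mathbb{F}|$ and $H\bigl((X_j)_{j\in T}\bigr)=|T|\cdot\ell\log|\mathbb{F}|$ for every $T\subseteq[n]$ with $|T|\le k$. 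For each helper $j\in D$, let $S_j=f_j(X_j)$ be the message it transmits, with output length $b_j$ bits, so that $N(i,D)=\sum_{j\in D}b_j$; and let $g$ be the decoder, so that $g\bigl((S_j)_{j\in D}\bigr)=X_i$ deterministically.

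The core step is to produce one inequality per subset $T\subseteq D$ of size $k-1$. Since $T\cup\{i\}$ has size $k$, the chain rule combined with the uniformity above gives
\[
H\bigl(X_i\mid (X_j)_{j\in T}\bigr)=H\bigl(X_i,(X_j)_{j\in T}\bigr)-H\bigl((X_j)_{j\in T}\bigr)=\ell\log|\mathbb{F}|.
\]
On the other hand, $X_i$ is a deterministic function of $(X_j)_{j\in T}$ together with $(S_j)_{j\in D\setminus T}$: the remaining messages $(S_j)_{j\in T}$ are themselves functions of $(X_j)_{j\in T}$, and then $X_i=g\bigl((S_j)_{j\in D}\bigr)$. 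Applying the standard entropy inequality $H(U\mid V)\le H(W)$ whenever $U$ is a deterministic function of $(V,W)$ yields
\[
\ell\log|\mathbb{F}|=H\bigl(X_i\mid (X_j)_{j\in T}\bigr)\le H\bigl((S_j)_{j\in D\setminus T}\bigr)\le\sum_{j\in D\setminus T}b_j.
\]

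The final step is a symmetrization: I average the inequality above over all $\binom{d}{k-1}$ choices of $T\subseteq D$ with $|T|=k-1$. Each index $j\in D$ lies in $D\setminus T$ with probability $(d-k+1)/d$, so the averaged right-hand side equals $\tfrac{d+1-k}{d}\sum_{j\in D}b_j=\tfrac{d+1-k}{d}N(i,D)$, and rearranging recovers exactly~\eqref{eq:cut-set}. None of the ingredients is genuinely hard; the only step that warrants care is the middle inequality of the key bound, which I would justify by the chain $H(X_i\mid V)\le H(X_i,W\mid V)=H(W\mid V)+H(X_i\mid V,W)=H(W\mid V)\le H(W)$ with $V=(X_j)_{j\in T}$ and $W=(S_j)_{j\in D\setminus T}$, using that $H(X_i\mid V,W)=0$ by the previous paragraph.
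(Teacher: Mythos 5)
Your proof is correct, but it is worth noting that the paper itself does not prove Theorem \ref{thm:cut-set-bound}: it is quoted from \cite{dimakis2010network}, where the bound is derived by a min-cut analysis of the information flow graph in a network-coding formulation of (functional) repair. Your argument takes a different, more direct route: put the uniform distribution on $\cC$, use the MDS property to get that any $k$ columns are jointly uniform (so $H\bigl(X_i\mid (X_j)_{j\in T}\bigr)=\ell\log|\mathbb{F}|$ for $|T|=k-1$, $i\notin T$), observe that $X_i$ is determined by $(X_j)_{j\in T}$ together with $(S_j)_{j\in D\setminus T}$, and then average the resulting inequality $\ell\log|\mathbb{F}|\le\sum_{j\in D\setminus T}b_j$ over all $(k-1)$-subsets $T\subseteq D$; the combinatorial weight $(d-k+1)/d$ comes out exactly right and yields \eqref{eq:cut-set}. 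All the individual steps check out, including the bijectivity of the projection onto any $k$ columns (injective by the MDS definition, bijective by cardinality $|\mathbb{F}|^{k\ell}$ on both sides) and the entropy chain $H(X_i\mid V)\le H(W\mid V)+H(X_i\mid V,W)\le H(W)$. What your approach buys is a short, self-contained proof tailored to the exact-repair model used in this paper (deterministic per-node functions with fixed output sizes, matching the $\mu_i:\Fp\rightarrow[s]$ framework of Section \ref{sec:general-rapair}); what the original flow-graph derivation buys is greater generality, since it covers functional repair and yields the full storage--bandwidth tradeoff curve rather than only the MSR point. The one modeling point to flag explicitly is that you assume each helper sends a fixed-length message depending only on its own symbol, which is exactly the paper's repair model, so there is no gap for the statement as used here.
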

	MDS codes are widely used in practice due to their optimal resiliency to erasures for the given amount of added redundancy. Therefore, MDS codes that also attain the cut-set bound during the repair scheme are highly desirable.  
	An $[n,k]$ MDS code achieving the cut-set bound  \eqref{eq:cut-set} with equality for any failed node $i$ by any $d$ helper nodes is called an \emph{$[n, k, d]$ MSR (minimum storage regenerating)} (array) code. 
	
	RS codes are the most known MDS codes, and they have found many applications both in  theory and practice (some   applications include QR codes \cite{soon2008qr}, Secret sharing schemes \cite{mceliece1981sharing}, space transmission
	\cite{wicker1999reed}, encoding data on CDs  \cite{wicker1999reed} and more). 
	The ubiquity  of these codes can be attributed to their  simplicity and their efficient encoding and decoding algorithms.  	 Thus, it might be surprising that RS codes were  not considered a possible solution for the repair problem in distributed storage systems. 
	In fact, many researchers  believed that RS codes do not admit efficient repair schemes except for the trivial scheme.
	Therefore, several MSR (that are not RS codes)
	codes were constructed, e.g., \cite{ye2017explicithighrate,ye2017explicit,rashmi2011optimal,wang2016explicit,goparaju2017minimum,raviv2017constructions}. 
	Yet, the problem of understanding the efficiency of the repair of RS codes remained unresolved, and since many distributed storage systems in fact employ RS codes (e.g., Facebook Hadoop Analytics cluster employs a  $[14, 10]$ RS code
	\cite{sathiamoorthy2013xoring}), this problem assumed even greater importance.

	
	In \cite{guruswami2017repairing} it was shown that the repair problem of RS codes could be seen as a new and interesting twist on the standard interpolation problem of polynomials. Thus, studying this problem  might have theoretical implications since polynomial interpolation is widely used across all areas of mathematics. 
	Before explaining the new twist on the interpolation problem, we give next a formal definition of RS codes.
	
	\begin{defi}
		Let $\alpha_1, \alpha_2, \ldots, \alpha_n$ be distinct points of the finite field $\mathbb{F}_q$ of order  $q$. For $k<n$ the $[n,k]_q$ \emph{RS code} 
		defined  by the  evaluation set $\{ \alpha_1, \ldots, \alpha_n \}$ is the   set of codewords 
		\[
		\left \lbrace \left( f(\alpha_1), \ldots, f(\alpha_n) \right) \mid f\in \mathbb{F}_q[x],\deg f < k \right \rbrace \;.
		\]
		When $n = q$, the resulting  code is called a \emph{full-length RS code}.
		
	\end{defi}
	
	Thus, a codeword of an $[n, k]$ RS code is the evaluation vector of some polynomial of degree
	less than $k$ at $n$ distinct points, i.e., the codeword that corresponds  to a polynomial $f$ of
	degree less than $k$ is $(f(\alpha_1),\ldots, f(\alpha_{n}))$. Assuming that the $n$th node has failed, i.e., the value  of $f(\alpha_n)$ is lost, the question boils down to how much  information is needed from the remaining nodes storing
	$f(\alpha_1),\ldots, f(\alpha_{n-1})$, in order to determine $f(\alpha_n)$. It is clear that any $k$ values of $f(\alpha_i)$ suffice to recover the polynomial $f$, and in particular recover  $f(\alpha_n)$. In the terminology of a repair scheme, this corresponds to $d=k$ helper nodes that transmit their entire symbol. This type of repair scheme is termed as the \emph{trivial repair}, although it also attains  the cut-set bound \eqref{eq:cut-set} with equality. 
	
	Hence, the more exciting and challenging question is whether it is possible to recover the polynomial value at a specific location without recovering the original polynomial, thereby possibly requiring less information from the $d$ helper nodes for $d>k$.
	It turns out that to determine $f(\alpha_n)$, one needs much less information than the amount needed in the trivial scheme that employs  polynomial interpolation.
	Indeed, Shanmugam, Papailiopoulos, Dimakis, and Caire \cite{shanmugam2014repair} 
	developed a general framework for
	repairing  scalar MDS codes and, in particular, RS codes, then they exemplified their framework by showing that there are repair schemes for RS codes that are more efficient than the trivial scheme.  
	Then, Guruswami and Wootters \cite{guruswami2017repairing} generalized the framework of \cite{shanmugam2014repair} and gave a complete characterization of linear  repair schemes of scalar MDS codes. They also provided few examples of RS codes with linear repair schemes that outperform the trivial repair scheme. 
	A more recent work by Tamo, Ye, and Barg \cite{tamo2017optimal} used the framework of \cite{guruswami2017repairing} and showed that for every $k < d < n$ there are RS codes that are indeed $[n,k,d]$ MSR codes. The caveat in their work is the large field extension degree (subpacketization) which is $L = \exp((1 + o(1))n \log(n))$. 
	
	Unfortunately, in \cite{tamo2017optimal}, the authors provided an almost matching lower bound of $L = \exp(\Omega(k \log(k)))$ on the degree of the field extension that is required in order for an RS code (and in general any scalar MDS code) to be an MSR code with a linear repair scheme. The results of \cite{guruswami2017repairing} were extended even further in \cite{dau2018repairing} and \cite{mardia2018repairing},  to consider multiple node failures. 
	As a final remark, the strong lower bound on the field size for linear repair schemes, given in \cite{tamo2017optimal}, gives a clear motivation for studying nonlinear repair schemes.

	\subsection{Linear and nonlinear repair schemes}
	As already discussed, any code, either scalar or array, can be viewed as an array code over some prime field $\Fp$. Therefore, consider an $[n,k]$ array code $\cC$ over $\Fp$ and  subpacketization $L$. We say that a repair scheme is linear if each computed function $\mu$ by a helper node is linear over the prime field $\Fp$. Equivalently, if for any $\alpha,\beta\in \Fp$ and $x,y\in \Fp^{L}$ $$\mu(\alpha x+\beta y)=\alpha\mu(x)+\beta \mu(y).$$
	
	All the existing efficient repair schemes of linear codes rely on the fact that  if they are viewed as array codes, their subpacketization level $L$ is extremely large. On the other hand, it is known that this is indeed needed if one employs linear repair schemes \cite{alrabiah2019exponential,tamo2017optimal}. Hence we are motivated to study nonlinear repair schemes. Furthermore, Guruswami and Wootters asked the following  in  \cite{guruswami2017repairing}: ``How much better can one do with nonlinear repair schemes?"
	
	
	A good starting point for constructing nonlinear repair schemes is to consider array codes with subpacketization $L=1$. For RS codes, this means to be defined over a prime field. 
	In such a case, any nontrivial repair scheme must   be nonlinear. Indeed, any nonzero linear function $\mu:\Fp\rightarrow \Fp$  must be bijective, which in terms of the transmitted information means that the helper node sends its entire symbol, and note that this corresponds to the trivial repair scheme.
	We conclude that any linear repair scheme is  equivalent to the trivial repair scheme for RS codes over a prime field. Thus, any improvement over the trivial repair scheme must be nonlinear, and it automatically implies that it outperforms all linear repair schemes.  
	
	Lastly, repairing RS codes is strongly related to leakage resilience of Shamir's secret sharing scheme. In the recent work of Benhamouda, Degwekar,   Ishai,  and  Rabin \cite{benhamouda2018local}, the authors showed that in certain parameters regimes, Shamir's secret sharing scheme over prime fields is leakage resilient (see \Cref{SSS} for the exact definitions). Thus, studying  nonlinear repair schemes of RS codes over prime fields could have implications outside the scope of coding for distributed storage.

	\subsection{Our contribution}
	In this paper, we make the first step towards understanding nonlinear repair schemes' power for linear codes. In particular,  we present a nonlinear repair scheme of  RS codes that outperforms all the linear ones. For all we know,  this is the first nonlinear repair scheme.
	
	In the real-world scenario of this problem,   large files are stored across a relatively small number of nodes. Therefore, each node stores  a large chunk of the file.  Hence, in this work, we think of $k,d$, and $n$ as small constants, while the alphabet size $p$ (which corresponds to the data stored by each node) is large and tends to infinity. We note that this point of view is different than the usual point of view in coding theory. In light of that, we say that a code symbol admits an asymptotically optimal repair (bandwidth) if the repair bandwidth tends to the cut-set bound \eqref{eq:cut-set}  as $p$ tends to infinity, and $n$ is fixed. 
	
	Below, we summarize the main contributions of this paper, and where $d$ is the number of helper nodes.

	\begin{enumerate}
		\item For any $2< d< n$, we show that $1-o(1)$  fraction of all the $[n,2]_p$ RS codes are asymptotically MSR codes, where the term $o(1)$ tends to zero as  $p$ tends to infinity. Namely, any code symbol admits an asymptotically optimal repair by any set of $d$ helper nodes. As a byproduct, this implies that nonlinear repair schemes can outperform all the linear ones. 
		\item We show that the phenomenon of RS codes with nonlinear repair schemes that outperform all the linear ones also holds over infinitely many field extensions. 
		\item For any $k<d\leq n/2$ we present an explicit construction of an $[n,k]_p$ RS code such that its symbols can be partitioned into two sets of equal size, such  that each symbol  admits an  asymptotically optimal repair using any $d$ helper nodes from the other set. 
		\item We show that any  full-length RS code over the prime  field $\Fp$ exhibits some efficient repair properties. 
		Specifically, for any $k<d$ and large enough $p$, we show that each node has  $\Omega(p)$ distinct sets of helper nodes of size $d$ that can repair it with asymptotically optimal repair bandwidth.

		
		
		
		\item Unlike the problem of repairing RS codes over field extensions, we show that one can not achieve the cut-set bound with equality, and over prime fields, one can obtain a tighter lower bound on the total incurred bandwidth. Concretely, we improve the cut-set bound (in the symmetric case, details below) by   showing  that every node must transmit another additive factor of $\Omega(\log(k)/(d-k+1))$ bits.
	\end{enumerate}

	\subsection{Organization of the paper}
	In \Cref{sec:general-rapair}, we present a general framework for repairing a failed node and obtain a necessary condition for a successful repair. In \Cref{sec:exsitence-results}, we show the existence of $[n,2]_p$ RS codes, which are asymptotically MSR codes, then extend this result to field extensions. In \Cref{stam-section}, we turn to explicit constructions of RS codes with efficient repair schemes.  
	We complement the achievability results (code constructions) given in the previous section by improving the cut-set bound in \Cref{sec:new-cut-set-bound}.
	In \Cref{SSS}, we discuss the implications of  our results on repairing RS codes for  leakage-resilient of Shamir's secret sharing scheme over prime fields. 
	We conclude in \Cref{sec:concluding remarks} with open questions.
	\section{A General framework for node repair} \label{sec:general-rapair}
	Throughout, let $n,k$, and $d$ be the number of nodes (code's length), the RS code dimension, and the number of helper nodes, respectively. Also, let $p$ be a prime number, where we think of $n,k$, and $d$ as constants, and $p$ tends to infinity. 
	
	This section describes a general  repair framework for repairing a single failed node that applies to any repair scheme. For simplicity, we will assume that the last node, i.e., the $n$th node is the failed node that needs to be repaired using all the remaining $n-1$ other nodes, i.e., $d=n-1$. To simplify the notation even further, we assume symmetry between the nodes in terms of the amount of information transmitted, i.e., each helper node transmits the same amount of information. We note that the framework can be easily generalized to the most general case,  i.e., an arbitrary failed node, an arbitrary number of helper nodes $k\leq d\leq n-1$, and the non-symmetric case.  Lastly, we would like to emphasize that the model assumes no errors; namely, the received information from the helper nodes is error-free. One can generalize this model by removing this assumption, as it was done in \cite{rashmi2012regenerating,silberstein2015error}.
	
	We begin with some needed notations. For positive integers $a<b$ let $[a, b]=\{a,a+1, \ldots, b\}$ and $[a]=\{1,2,\ldots, a\}$. Throughout, let  $p$ be a prime number and for a positive integer $m$ let  $\Fpm$  be the  finite field of size $p^m$.
	An arithmetic progression in some field $\mathbb{F}$ of length $N$ and a step $s\in \mathbb{F}$ is a set of the form $\{a, a+s, \ldots, a+(N-1)s \}$ for some $a \in \mathbb{F}$.
	For two sets $A,B\subseteq \Fp$,  define their \emph{sumset} as  
	$ A + B := \{ a+b \mid a\in A, b\in B \}$. 
	For an element $\gamma \in \Fp$ we denote by $\gamma \cdot A:=\{\gamma \cdot a : a\in A \}$ all the possible products of $\gamma$ with elements in $A.$
	
	
	
	We are now ready to present the general repair  framework. Let $\cC \subseteq \Fp^n$ be a linear code over $\Fp$. A repair scheme for its $n$th symbol is a set of $n-1$ functions $\mu_i: \Fp \rightarrow [s]$ and a function $G:[s]^{n-1}\rightarrow \Fp$ such that for any codeword $(c_1,\ldots,c_n)\in \cC$ 
	\begin{equation}
	\label{stamstam}
	G(\mu_{1}(c_1),\ldots,\mu_{n-1}(c_{n-1}))=c_n.
	\end{equation}
	Upon a failure of the $n$th symbol, the $i$th symbol, which holds the symbol $c_i$ computes $\mu_i(c_i)$ and transmits it over the network using $\lceil \log s\rceil $ bits. Upon receiving  the $n-1$ messages $\mu_i(c_i)$, the repair scheme is completed by calculating the $n$th symbol using   \eqref{stamstam}. 
	The  bandwidth of the repair scheme, which is the total number of bits transmitted across the network during the repair, equals $(n-1)\ceil{\log(s)}$ since every node transmits $\ceil{\log(s)}$ bits.
	
	One can put any repair scheme under this framework, and the difficulty of the problem stems from finding the functions $\mu_i$ that are informative enough, i.e., they provide enough information about $c_i$, from which collectively it is possible to compute the symbol $c_n$. However, they should not be too informative, in the sense that the size of the image, $s$, should be small to minimize the total incurred bandwidth. We have the following simple observation.
	\begin{obs}
		\label{obs1}
		A set of $n-1$ functions $\mu_i:\mathbb{F}_p\rightarrow [s]$ can be  extended to a repair scheme, i.e., there exists a function $G$ that satisfies \eqref{stamstam} if and only if for any two codewords $c,c'\in \cC$, such that $\mu_i(c_i)=\mu_i(c'_i)$ for all $i\in[n-1]$ it holds that $c_n=c'_n$.
	\end{obs}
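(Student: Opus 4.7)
The plan is a standard well-definedness argument, split into the two implications of the ``if and only if''.

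For the forward direction, suppose some function $G:[s]^{n-1}\to\mathbb{F}_p$ satisfies \eqref{stamstam} for every codeword. Given two codewords $c,c'\in\cC$ with $\mu_i(c_i)=\mu_i(c'_i)$ for all $i\in[n-1]$, I would simply apply $G$ to the common tuple: by \eqref{stamstam} applied to $c$ we get $c_n=G(\mu_1(c_1),\dots,\mu_{n-1}(c_{n-1}))$, and applied to $c'$ we get $c'_n=G(\mu_1(c'_1),\dots,\mu_{n-1}(c'_{n-1}))$. Since the two tuples are equal, $c_n=c'_n$.

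For the reverse direction, assume the stated injectivity-on-the-last-coordinate property. I would define $G$ by specifying its value on each tuple $(y_1,\dots,y_{n-1})\in[s]^{n-1}$. If the tuple lies in the image set
\[
\mathcal{I}:=\{(\mu_1(c_1),\dots,\mu_{n-1}(c_{n-1})):c\in\cC\},
\]
pick any codeword $c$ producing it and set $G(y_1,\dots,y_{n-1}):=c_n$. The hypothesis is exactly what is needed to see that this assignment does not depend on the particular choice of $c$, so $G$ is well defined on $\mathcal{I}$. On the complement $[s]^{n-1}\setminus\mathcal{I}$ define $G$ arbitrarily (say, to be zero); these values never arise in \eqref{stamstam}. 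By construction, \eqref{stamstam} holds for every $c\in\cC$.

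There is essentially no obstacle here: the observation is purely a restatement of the fact that a function exists through a family of input–output pairs precisely when those pairs are consistent, i.e., equal inputs force equal outputs. The only point worth emphasizing in the writeup is that the domain of $G$ must be all of $[s]^{n-1}$, so the construction must extend the natural partial function defined on $\mathcal{I}$ to the full domain; this extension is unconstrained by \eqref{stamstam} and therefore trivial.
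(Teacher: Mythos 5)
Your proposal is correct and follows essentially the same route as the paper: define $G$ on the image tuples by the corresponding $c_n$, use the hypothesis for well-definedness, extend arbitrarily off the image, and note the converse direction is immediate by applying $G$ to both codewords. Nothing is missing.
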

	\begin{proof}
		Let $\mu_i$ for $i\in [n-1]$ be the $n-1$ functions as above, and define the function $G$ as follows. For a codeword $c\in \cC$ define the value of $G$ as in \eqref{stamstam}, i.e., $G(\mu_{1}(c_1),\ldots,\mu_{n-1}(c_{n-1}))=c_n.$ For all other points of $[s]^{n-1}$ define the value of $G$ arbitrarily. 
		By the property the functions $\mu_i$ satisfy, it is clear that the function $G$ is well defined, and together they form a valid repair scheme. 
		The other direction is trivial. 
	\end{proof}

	Every function $\mu_i$ defines a partition $\{\mu_i^{-1}(a): a\in [s]\}$ of $\Fp$.
	Vice versa, any partition of $\Fp$ to $s$ sets defines a function  whose value at the point $a\in \Fp$ is the index of the set that contains it. Hence, in the sequel, we will define the functions $\mu_i$ by  partitions of $\Fp$ to $s$ sets. 
	This paper's main contribution is identifying the  `right'  functions $\mu_i$, equivalently, the `right' partitions of $\mathbb{F}_p$ that define the $\mu_i$'s. As it turns out,  arithmetic progressions are the key for constructing the needed partitions that give rise  to an   
	efficient nonlinear repair schemes for codes over prime fields, as explained next. 
	
	Fix an integer $1\leq t\leq p$, set $s=\lceil  p/t \rceil$ and define $A_0,\ldots, A_{s-1}$ to be the partition of $\Fp$ into the following $s$ arithmetic progressions of length $t$ and step  $1$ 
	\begin{equation} \label{eq:Fp-partition}
	A_j= \begin{cases}
	\{ j t, j t + 1, \ldots, j t + t - 1\}  & 0\leq j\leq s-2\\
	\{(s-1)t, \ldots, p - 1\} & j=s-1. 
	\end{cases}
	\end{equation}
	For a nonzero  $\gamma \in \Fp$, it is easy to verify that $\gamma \cdot A_0,\ldots,\gamma \cdot A_{s-1}$ is also a partition of $\Fp$  into arithmetic progressions of length $t$ and step $\gamma$. Each function $\mu_i, i \in [n-1]$ of the repair scheme  will be defined by a partition $\gamma_i \cdot A_0,\ldots,\gamma_i \cdot A_{s-1}$
	for an appropriate selection of $\gamma_i$. Notice that the $\gamma_i$'s will be distinct for distinct $i$'s and therefore also the functions $\mu_i$ will be distinct for distinct $i$'s. Phrasing Observation \ref{obs1} in the language of partitions gives the following. The partitions defined by the $\gamma_i$'s extend to a repair scheme if (and only if) for any two codewords $c,c'\in \cC$  that belong to the same set in all of the $n-1$ different partitions, i.e., $c_i,c'_i\in \gamma_i \cdot A_{j_i}$ for all $i\in [n-1]$, it holds that $c,c'$ agree on their $n$th symbol, i.e., $c_n=c'_n$. 
	In such a case, we say that the $\gamma_i$'s define a valid repair scheme, and in the following proposition, we provide a relatively simple sufficient but instrumental condition for it. 
	\begin{prop} \label{con:repair-condition0}
		Let  $\cC\subseteq \Fp ^n$ be a linear code, $t < p$  be an integer,  and $\gamma_1,\ldots,\gamma_{n-1}$ be   nonzero elements of $\Fp$. If for any $c\in \cC$ with $c_i \in \gamma_{i} \cdot [-t,t]$ for all $1\leq i\leq n-1$, it holds that $c_n = 0$, then, the $\gamma_i$'s define  a valid repair scheme for the $n$th node with a total bandwidth of $(n-1)\log \left( \ceil{p/t}\right)$ bits.	
	\end{prop}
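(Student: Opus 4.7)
The plan is to combine Observation \ref{obs1} with the linearity of $\cC$, reducing the claim to a statement about the difference of two codewords lying in the same parts of the partitions.

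First I would recall that, by the construction preceding the proposition, each $\mu_i$ is the partition function associated with $\gamma_i \cdot A_0,\ldots,\gamma_i \cdot A_{s-1}$ where $s=\lceil p/t\rceil$ and the $A_j$ are the arithmetic progressions of step $1$ and length at most $t$ defined in \eqref{eq:Fp-partition}. By Observation \ref{obs1}, to verify that these partitions extend to a valid repair scheme it suffices to show that whenever two codewords $c,c'\in \cC$ satisfy $c_i,c_i'\in \gamma_i\cdot A_{j_i}$ for all $i\in[n-1]$ (with the same index $j_i$ in the $i$th partition), necessarily $c_n=c_n'$.

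The key step is to exploit linearity. Since $\cC$ is linear, $c-c'\in \cC$. For every $i\in[n-1]$, both $c_i$ and $c_i'$ belong to $\gamma_i\cdot A_{j_i}$, an arithmetic progression of step $\gamma_i$ and length at most $t$. Consequently
\[
(c-c')_i \in \gamma_i\cdot(A_{j_i}-A_{j_i}) \subseteq \gamma_i\cdot[-(t-1),t-1]\subseteq \gamma_i\cdot[-t,t],
\]
where the set identity $A_{j_i}-A_{j_i}\subseteq[-(t-1),t-1]$ holds uniformly (including for the possibly shorter last block $A_{s-1}$) because any two elements of a length-at-most-$t$ arithmetic progression of step $1$ differ by an integer in $[-(t-1),t-1]$. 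Applying the hypothesis of the proposition to the codeword $c-c'$ yields $(c-c')_n=0$, i.e.\ $c_n=c_n'$, which is precisely the criterion of Observation \ref{obs1}.

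The bandwidth bound is then immediate: each of the $n-1$ helper nodes transmits the index of the part containing its symbol, which is an element of $[s]=[\lceil p/t\rceil]$ and can be encoded using $\lceil \log(\lceil p/t\rceil)\rceil$ bits, for a total of $(n-1)\log(\lceil p/t\rceil)$ bits (up to the integer-rounding convention used throughout the paper). There is no real obstacle here; the only thing to be mildly careful about is handling the final, possibly shorter progression $A_{s-1}$ in the difference set computation, which is why I would write that step with the generous inclusion into $[-t,t]$ rather than $[-(t-1),t-1]$.
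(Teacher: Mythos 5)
Your proposal is correct and follows essentially the same route as the paper: define the $\mu_i$ via the partitions $\gamma_i\cdot A_0,\ldots,\gamma_i\cdot A_{s-1}$, apply Observation \ref{obs1}, and use linearity so that the difference $c-c'$ has its $i$th symbol in $\gamma_i\cdot[-t,t]$, whence the hypothesis forces $(c-c')_n=0$. The only cosmetic difference is that you track the slightly tighter set $\gamma_i\cdot[-(t-1),t-1]$ before relaxing to $\gamma_i\cdot[-t,t]$, which changes nothing.
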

	
	\sloppy
	{Before proving the proposition, we remark that the actual number of bits each node sends is $\ceil{\log (\ceil{p/t})}$ and thus the total bandwidth is  $(n-1) \ceil{\log (\ceil{p/t})}\leq  (n-1) \log (\ceil{p/t}) + n - 1$. Since we think of $n$ as being a small constant compared to $p$, the additive factor of $(n-1)$  is negligible  when $\log (\ceil{p/t}) = \Omega (\log(p))$, which   is the case throughout the paper. Thus, for ease of notation we will omit the ceiling in the sequel, and assume that each node transmits $\log \left( \ceil{p/t}\right)$ bits.}
	
	\begin{proof}
		Let $s=\ceil{p/t}$ and define the sets $A_j, 0\leq j\leq s-1$ as in \eqref{eq:Fp-partition}. For each $i\in[n-1]$, define the function $\mu_i$ according to the partition $\gamma_i\cdot A_0,\ldots,\gamma_i\cdot A_{s-1}$. Namely, $\mu_i(a)=j$ if and only if $a\in \gamma_i\cdot  A_j$.
		Let $c, c'\in \cC$ be two codewords that agree on the $n-1$ values  $\mu_i(c)=\mu_i(c'), i\in [n-1]$. Then, $c-c'$ is a codeword such that its $i$th symbol belongs to the set $ \gamma_{i}\cdot [-t,t]$ for $i\in [n-1]$. Therefore, its $n$th symbol is equal to zero which implies that $c_n=c'_n$, and by Observation \ref{obs1} the $\gamma_i$'s define a valid repair scheme. The claim about the bandwidth follows since each partition consists exactly $s$ sets. 
	\end{proof}
	Since our primary focus is RS codes,  the following proposition is a   specialization of  \Cref{con:repair-condition0} to this case. In fact, we also slightly generalize it to address the case of arbitrary node repair and an arbitrary set of helper nodes. 
	\begin{prop}
		\label{con:repair-condition}
		Consider an  $[n,k]_p$ RS code defined by the  evaluation points  $\alpha_1, \ldots, \alpha_n$.  Let $\ell \in [n]$ be the failed node and $\mathcal{D}\subset [n]\backslash\{\ell \}$ be a set of $d$ helper nodes for $k\leq d\leq n-1$. Furthermore,   let $t<p$ be an integer and $\gamma_i, i\in \mathcal{D}$ be   nonzero elements of $\Fp$. 
		If for any polynomial $f(x)\in \Fp[x]$ of degree less than $k$ with  $f(\alpha_i)\in \gamma_i\cdot [-t,t]$ for all $i\in \mathcal{D}$, it holds that $f(\alpha_{\ell})=0$, then, the $\gamma_i$'s define  a valid repair scheme for the $\ell$th node with a total bandwidth of $d\log \left( \ceil{p/t}\right)$ bits.
	\end{prop}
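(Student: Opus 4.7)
The plan is to simply lift the proof of \Cref{con:repair-condition0} to the RS setting by using the polynomial parametrization of codewords. Set $s=\lceil p/t\rceil$ and take the partition $A_0,\ldots,A_{s-1}$ of $\Fp$ from \eqref{eq:Fp-partition}. For each helper index $j\in\mathcal{D}$, I define the function $\mu_j:\Fp\to[s]$ by the partition $\gamma_j\cdot A_0,\ldots,\gamma_j\cdot A_{s-1}$; that is, $\mu_j(a)=r$ iff $a\in\gamma_j\cdot A_r$. Since each $\gamma_j$ is nonzero, these are indeed partitions of $\Fp$ into $s$ blocks, so each helper node transmits exactly $\log\lceil p/t\rceil$ bits, giving the claimed total bandwidth of $d\log\lceil p/t\rceil$.

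Next I would verify the consistency condition of \Cref{obs1}. Let $c,c'$ be two codewords of the RS code, associated with polynomials $f,f'\in\Fp[x]$ of degree less than $k$, and assume that $\mu_j(c_j)=\mu_j(c'_j)$ for every $j\in\mathcal{D}$. This means that $f(\alpha_j)$ and $f'(\alpha_j)$ belong to the same block $\gamma_j\cdot A_{r_j}$. Since $A_{r_j}$ is an arithmetic progression of step $1$ and length at most $t$, the difference of any two of its elements lies in $[-(t-1),t-1]\subseteq[-t,t]$, and multiplying by $\gamma_j$ gives $(f-f')(\alpha_j)=f(\alpha_j)-f'(\alpha_j)\in\gamma_j\cdot[-t,t]$.

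The polynomial $g:=f-f'$ has degree less than $k$ and satisfies $g(\alpha_j)\in\gamma_j\cdot[-t,t]$ for every $j\in\mathcal{D}$. The hypothesis of the proposition therefore applies to $g$, yielding $g(\alpha_i)=0$, i.e.\ $c_i=c'_i$. By \Cref{obs1} (applied in the obvious symmetric form to an arbitrary failed index $i$ and helper set $\mathcal{D}$, which only requires relabeling), the $\gamma_j$'s extend to a valid repair function for the $i$th node, completing the proof. There is no real obstacle beyond this relabeling: the whole argument is a direct transcription of the reasoning behind \Cref{con:repair-condition0}, where the only RS-specific ingredient is the fact that the difference of two codewords is again a codeword, i.e.\ the evaluation vector of a polynomial of degree less than $k$, so that the hypothesis about polynomials can be invoked on $f-f'$.
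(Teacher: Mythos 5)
Your proof is correct and is essentially the same as the paper's: the paper disposes of the statement in one line by applying \Cref{con:repair-condition0} to the punctured $[d+1,k]_p$ RS code on the evaluation points $\{\alpha_j : j\in \mathcal{D}\cup\{i\}\}$, while you simply inline that same argument (partitions by $\gamma_j\cdot A_r$, the consistency condition of \Cref{obs1} after the obvious relabeling, and the difference polynomial $g=f-f'$ of degree less than $k$ landing in $\gamma_j\cdot[-t,t]$). Since the key step and the bandwidth count coincide, nothing further is needed.
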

	
	\begin{proof}
		The result is obtained by applying  \Cref{con:repair-condition0} to the punctured $[d+1,k]_p$ RS code defined by  the evaluation points $\{ \alpha_i \mid i\in \mathcal{D} \cup \{\ell\}\}$.
	\end{proof}
	All the efficient repair schemes given in the paper will follow by showing that there is a choice of evaluation points $\alpha_1, \ldots, \alpha_n\in \Fp$ such that any node (and not only the last node) can be efficiently repaired by invoking \Cref{con:repair-condition} with carefully designed partitions ($\gamma_i$'s).
	
	We say that  a polynomial $f\in \Fp[x]$ \emph{passes through} $(\alpha, A)$    for $\alpha \in \Fp$ and a subset $A\subseteq \mathbb{F}_p$ if $f(\alpha)\in A$.  
	Figure \ref{fig:repair_map_graph} illustrates a valid repair scheme (defined by the $\gamma_i$'s) of  the $n$th node of a RS code, that satisfies the condition of Proposition \ref{con:repair-condition}. Namely, any two polynomials $f(x),g(x)$  that pass through the same set in each of the $n-1$ partitions,  attain the same value at $\alpha_n$, i.e., $f(\alpha_n)=g(\alpha_n)$. 
	\begin{figure}[h]
		\begin{tikzpicture}
		
		\usetikzlibrary{arrows}
		\usetikzlibrary{shapes}
		\usetikzlibrary{positioning}
		
		\def\setnodeone#1#2{
			\node[ultra thick, draw=black, ellipse,minimum height=100pt, minimum width=30pt, align=center] (#1) {#2}	
		}
		\def\setnodetwo#1#2#3{
			\node[ultra thick, draw=black, ellipse,minimum height=100pt, minimum width=30pt, align=center,below right=0.05cm and 2cm of #3] (#1) {#2}
		}
		\def\lastnode#1#2#3{
			\node[ultra thick, draw=black, ellipse,minimum height=100pt, minimum width=30pt, align=center,right=2cm of #3] (#1) {#2};
		}
		
		\setnodeone{a}{};
		\node[above=0.1cm of a] (set1) {$(\alpha_1, \gamma_1 \cdot A_{j_1})$};
		\node[above=0.3cm of a.center] (a1) {};
		\filldraw[red] (a1.center) circle (2pt);
		\node[below=0.3cm of a.center] (a2) {};
		\filldraw[blue] (a2.center) circle (2pt);

		\setnodetwo{b}{}{a};
		\node[above=0.1cm of b] (set1) {$(\alpha_2, \gamma_2 \cdot A_{j_2})$};
		\node[above=0.3cm of b.center] (b2) {};
		\filldraw[blue] (b2.center) circle (2pt);
		\node[below=0.4cm of b.center] (b1) {};
		\filldraw[red] (b1.center) circle (2pt);
		
		\node[right=6cm of a] (dots) {};
		
		\lastnode{last}{}{dots};
		\node[above=0.1cm of last] (set1) {$(\alpha_{n-1}, \gamma_{n-1} \cdot A_{j_{n-1}})$};
		\node[above=0.5cm of last.center] (l2) {};
		\filldraw[blue] (l2.center) circle (2pt);
		\node[below=1cm of last.center] (l1) {};
		\filldraw[red] (l1.center) circle (2pt);

		\node[below right=2cm and 2 cm of last] (po) {$f(\alpha_n) = g(\alpha_n)$} ;
		\filldraw[black] (po.north) circle (2pt);

		\draw[red,very thick] (a1.center) to[out=-30,in=150] (b1.center);
		\draw[red,very thick] (b1.center) to[out=-30,in=180] node[above] {$g(x)$} (l1.center);
		\draw[red,very thick] (l1.center) to[out=0,in=120] (po.north);
		
		\draw[blue,very thick] (a2.center) to[out=-30,in=150] (b2.center);
		\draw[blue,very thick] (b2.center) to[out=-30,in=180] node[above] {$f(x)$} (l2.center);
		\draw[blue,very thick] (l2.center) to[out=0,in=100] (po.north);
		\end{tikzpicture}
		\caption{A valid repair of the $n$th node of an $[n,k]$ RS code, that satisfies the condition of Proposition  \ref{con:repair-condition}. The   two polynomials  $f(x)$ and $g(x)$ of degree less than $k$    pass through $(\alpha_i,\gamma_i\cdot A_{j_i})$ for every $i \in[n-1]$ and hence $f(\alpha_n) = g(\alpha_n)$.}
		\label{fig:repair_map_graph}
	\end{figure}
	
	
	

	\section{Asymptotically MSR RS codes over $\Fp$}
	\label{sec:exsitence-results}
	In this section, we show the existence of RS codes over prime fields that have efficient repair schemes, but first  
	we begin with the following two definitions that will be used in the sequel.
	
	\begin{defi}
		A code symbol of an $[n,k]$ (scalar) MDS code over a field $\mathbb{F}$ is  said to admit an asymptotically optimal repair (bandwidth) if  it can be  repaired by some set of  $d$  helper nodes (for $k\leq d<n$) and bandwidth at  most 
		\begin{equation}
		\label{asmptotic-cut-set}
		\log(|\mathbb{F}|)\left( \frac{ d}{d-k+1}+o(1) \right)
		\end{equation}
		bits, where the term $o(1)$ tends to zero as the field size $|\mathbb{F}|$ tends to infinity. 
	\end{defi}
	
	\begin{defi}
		An $[n,k]$ (scalar) MDS code  is said to be  asymptotically $[n,k,d]$ MSR if any of its code symbols admits asymptotically optimal repair, i.e., satisfy \eqref{asmptotic-cut-set}, by \emph{any} set of $d$ helper nodes.     
	\end{defi}
	We proceed to show that over large enough prime fields, there exist $[n,2]$ RS codes that are asymptotically $[n,2,d]$ MSR codes for every $2<d<n$. Then, we proceed to generalize the result to RS over field extensions. 
	We would like to stress  that the constructions and those presented in the following sections are  asymptotically optimal in the strong sense. Namely, the actual bandwidth differs from the cut-set bound by an additive constant that depends only on the parameters $n,k,d$ and not on the alphabet size.  
	Furthermore, the constructions' repair schemes outperform all the linear repair schemes, which is  a phenomenon that  was not known to exist before.  In particular, for prime fields, the only known repair scheme is the trivial scheme (which is a linear scheme).
	Therefore, by outperforming the trivial repair over prime fields, we obtain the first known example of a nontrivial repair over prime fields. 
	
	\subsection{Existence of asymptotically  $[n,2,d]$ MSR RS codes over $\mathbb{F}_p$} \label{sec:n-2-existence}
	
	In this section we show by a counting (encoding)  argument the existence of an asymptotically $[n,2,d]_p$ MSR RS code for a large enough prime $p$ and any $2\leq d\leq n-1$. In fact, we show a stronger result, that is,  for any $\varepsilon>0$ and a large enough prime $p$, a fraction of $1-\varepsilon$ of  all the   RS codes  satisfy this property. 
	In particular, for  such a code 
	every code symbol $i\in [n]$ can be repaired by any  $d$  helper nodes, where each helper node transmits $(1/(d-1)) \log(p) + O_{n,\varepsilon}(1)$ bits, which is roughly a $1/(d-1)$ fraction of the amount of information it holds. 
	\begin{thm} \label{thm:2-n-case}
		Let  $\varepsilon>0$ and  $2\leq d < n$, then for  a large enough prime $p$, a fraction of $1-\varepsilon$ of  all the $[n,2]_p$  RS codes
		are asymptotically $[n,2,d]$ MSR  codes with repair bandwidth  $\frac{d}{d-1} \log(p)  +O_{n,\varepsilon}(1).$
	\end{thm}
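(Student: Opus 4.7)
The plan is to apply Proposition~\ref{con:repair-condition} separately for each failed node $i \in [n]$ and each helper set $D \subseteq [n] \setminus \{i\}$ of size $d$. For each such pair, I take $\gamma_j = \alpha_j - \alpha_i$ for $j \in D$, and set $t = c \cdot p^{(d-2)/(d-1)}$, where $c = c(n,d,\varepsilon)$ is a small constant to be chosen at the end. This choice is engineered so that each helper transmits about $\log(p/t) = \tfrac{1}{d-1}\log p - \log c$ bits, yielding total bandwidth $\tfrac{d}{d-1}\log p + O_{n,\varepsilon}(1)$, matching the asymptotic cut-set bound \eqref{asmptotic-cut-set}.

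Next, I will reformulate the hypothesis of Proposition~\ref{con:repair-condition} geometrically. A polynomial $f(x) = a + bx$ with $f(\alpha_i) = y$ satisfies $f(\alpha_j) = y + b(\alpha_j - \alpha_i)$, so with the above $\gamma_j$ the constraint $f(\alpha_j) \in \gamma_j \cdot [-t,t]$ becomes $y/(\alpha_j - \alpha_i) + b \in [-t,t]$. Consequently, the chosen scheme for $(i,D)$ fails precisely when there exist a nonzero $y$ and some $b \in \mathbb{F}_p$ such that all $d$ points $\{y/(\alpha_j - \alpha_i) : j \in D\}$ lie in a common arc of length $2t+1$ in $\mathbb{F}_p$. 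So the bad event, over the random choice of evaluation points, is the existence of $y \neq 0$ for which these $d$ transformed points cluster in a short arc.

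The argument is completed by a probabilistic union bound. Choose $(\alpha_1,\ldots,\alpha_n)$ uniformly from ordered tuples of distinct elements of $\mathbb{F}_p$. Fix a triple $(i,D,y)$ with $y \neq 0$: since multiplication by $y$ is a bijection of $\mathbb{F}_p^*$ and the $(\alpha_j - \alpha_i)^{-1}$ for $j \in D$ are close to uniform over $d$-tuples of distinct nonzero elements (the corrections from the remaining $n-d-1$ distinctness constraints are negligible when $p \gg n$), the standard estimate for $d$ uniform points on a cycle of length $p$ lying in an arc of length $L = 2t+1$ bounds the failure probability by $O_d((t/p)^{d-1})$. Summing over the at most $n \binom{n-1}{d}(p-1)$ triples $(i,D,y)$ gives total failure probability at most $O_{n,d}(t^{d-1}/p^{d-2}) = O_{n,d}(c^{d-1})$ after substituting $t = c\, p^{(d-2)/(d-1)}$. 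Choosing $c$ small enough (as a function of $n,d,\varepsilon$) makes this bound at most $\varepsilon$, so a $1-\varepsilon$ fraction of evaluation tuples yield codes for which every $(i,D)$-repair succeeds within the claimed bandwidth.

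The main technical obstacle is the quantitative arc-clustering estimate: one must carefully justify the $O_d((t/p)^{d-1})$ probability bound for the transformed $d$-tuple $(y/(\alpha_j-\alpha_i))_{j \in D}$ despite the global distinctness constraints coupling the $\alpha$'s across all $n$ nodes, and track the constants well enough to extract an explicit (albeit non-tight) dependence of $c$ on $n,d,\varepsilon$. Once this estimate is in place with the constants laid out, the union bound and bandwidth calibration above deliver the theorem.
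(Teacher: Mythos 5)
Your proposal is correct, and it reaches the same quantitative conclusion as the paper with the same choice of partitions ($\gamma_j=\alpha_j-\alpha_i$, arithmetic progressions) and the same threshold $t=\Theta_{n,d,\varepsilon}\bigl(p^{(d-2)/(d-1)}\bigr)$, but it executes the counting differently. The paper bounds the number of bad evaluation tuples by an explicit injective encoding: from a violating polynomial $f$ it builds the auxiliary polynomial $g(x)=f(x)-\frac{f(\alpha_j)}{\gamma_j}(x-\alpha_\ell)$ vanishing at $j=\min\mathcal{D}$, records $n-d+2$ field elements plus $d-1$ values in $[-2t,2t]$, and recovers each remaining $\alpha_i$ by solving a linear equation, giving at most $n\binom{n-1}{d}p^{n-d+2}(4t+1)^{d-1}$ bad tuples. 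You instead parametrize a violating $f$ by its value $y=f(\alpha_i)\neq 0$ and slope, observe that violation for $(i,\mathcal{D},y)$ is exactly the event that the $d$ points $y(\alpha_j-\alpha_i)^{-1}$ lie in a common arc of $2t+1$ consecutive residues, and union-bound over $(i,\mathcal{D},y)$; this yields the same bad fraction $O_{n,d}(t^{d-1}/p^{d-2})$, so the two arguments are equivalent in strength, with yours trading the encoding/decoding bookkeeping for a standard clustering estimate. The one place you flag as the "main technical obstacle" is actually a non-issue: for a uniformly random tuple of distinct evaluation points, the marginal of $(\alpha_i,(\alpha_j)_{j\in\mathcal{D}})$ is \emph{exactly} uniform over $(d+1)$-tuples of distinct elements (the other $n-d-1$ coordinates can simply be marginalized out), and the clustering bound is elementary: conditioning on $\alpha_i$ and one helper, each further value $y(\alpha_j-\alpha_i)^{-1}$ must fall within cyclic distance $2t$ of the first, so at most $4t+1$ admissible choices of $\alpha_j$ out of at least $p-d$, giving probability at most $\bigl((4t+1)/(p-d)\bigr)^{d-1}$, which is all you need. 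Two small points to tidy up: take $t$ to be an integer (floor of your expression), and note that for $d=2$ the exponent forces $t=O(1)$ so the bound degenerates---but there the claim is the trivial repair with $d=k=2$ full symbols, a degeneracy shared by the paper's own proof, so the substantive content in both arguments is $d\geq 3$.
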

	
	\begin{proof}
		Let $t<p$ be an integer to be determined later, and let $\alpha_1,\ldots, \alpha_n$ be the evaluation set of the $[n,2]_p$ RS code. We would like to show that any code symbol    admits an asymptotically optimal repair by applying Proposition \ref{con:repair-condition}.
		Assume that the $\ell$th symbol has failed  and that the there exists a set $\mathcal{D}\subseteq [n]\backslash \{\ell\},|\mathcal{D}|=d$ that   does not satisfy the condition in Proposition \ref{con:repair-condition} with $\gamma_i:= \alpha_i- \alpha_\ell$
		, for $i\in \mathcal{D}$. Therefore,  there exists a polynomial $f(x)$ of degree at most one   that passes through $(\alpha_i, \gamma_i\cdot [-t,t])$ for $i\in \mathcal{D}$ and  $f(\alpha_\ell)\neq 0$.
		Let $j=\min\{ \mathcal{D}\} $ and define the polynomial 
		\[
		g(x) := f(x) - \frac{f(\alpha_j)}{\gamma_j} (x-\alpha_\ell).\;
		\]
		Notice that   $g(\alpha_j) = 0$ and $g(\alpha_\ell)=f(\alpha_\ell)\neq 0$, hence $g(x)$ is of the form $g(x) = m(x - \alpha_{j})$ for some $m\neq 0$. 
		Also, for $i\in \mathcal{D}\backslash \{j\}$
		\begin{equation}
		\label{main-thm}
		0\neq g(\alpha_i)=
		f(\alpha_i) - \frac{f(\alpha_j)}{\gamma_j} \gamma_i\in  \gamma_i\cdot [-2t,2t],\end{equation}
		since $f(\alpha_i), \frac{f(\alpha_j)}{\gamma_j} \gamma_i\in  \gamma_i\cdot [-t,t]$. 
		Here, the fact that  $\gamma_i\cdot [-t,t]$ is an arithmetic progression plays a crucial rule, since it implies that the size of the set  $\gamma_i\cdot [-t,t]-\gamma_i\cdot [-t,t]$ is small.  
		We conclude that one can find a linear polynomial $g(x)$ with the above properties for any such evaluation set. 
		Next, we give an encoding for  the (bad) evaluation sets.
		
		\vspace{0.3 cm}{\bf Encoding: }\begin{enumerate}
			\item Encode the index of the failed symbol $\ell$, and the set $\mathcal{D}.$ There are $n\binom{n-1}{d}$ options for this.
			\item Encode  the evaluation points
			$\alpha_i,i\notin \mathcal{D}\cup \{\ell\}$. There are $p^{n-d-1}$ options for this. 
			\item Encode the evaluation points $\alpha_j,\alpha_\ell$ and $\alpha_k$, where $k$ is the second smallest element of $\mathcal{D}$. There are at most $p^3$ options for this.
			\item Encode the value $g(\alpha_k)$. By \eqref{main-thm} there are at most $4t$ options for this.
			\item For each $i\in \mathcal{D}\backslash\{j,k\}$ encode the value $g(\alpha_i)/\gamma_i$ which is in the set $[-2t,2t]\backslash\{0\}$  by \eqref{main-thm}. Again, there at most $(4t)^{d-2}$ for this. 
		\end{enumerate}
		Next, we show that given the encoding, one can recover the original evaluation set. In other words, the encoding mapping is injective. 
		
		\vspace{0.2 cm}
		{\bf Decoding:}
		Given Steps (1)-(2) one can recover the index of the failed symbol, the set $\mathcal{D}$ and the evaluation points $\alpha_i,i\notin \mathcal{D}\cup \{\ell\}$. 
		Given Steps (3)-(4) and the fact that $g(x) $ is a nonzero polynomial of degree at most one that vanishes at $\alpha_j$, we can find the value of  $m$ and hence recover  the polynomial $g(x) = m(x - \alpha_j)$.
		Next, it remains to recover the points $\alpha_i$, for $i\in \mathcal{D}\backslash\{j,k\}$. Given the value of $\alpha_{\ell}$ and  $g(\alpha_i)/\gamma_i$  obtained in Steps (4) and (5), respectively,  one can construct the \emph{non-degenerate} equation  $$g(\alpha_i)-(\alpha_i-\alpha_\ell)\cdot \frac{g(\alpha_i)}{\gamma_i}=0,$$
		in the variable $\alpha_i$. Since it is a linear equation,  the value of $\alpha_i$ can be uniquely determined.
		
		Hence,  the number of such (bad) evaluation sets, i.e., that do not satisfy the condition  in Proposition \ref{con:repair-condition} for repairing any of the $n$ symbols is at most
		the total number of possible encodings, which is at most  $n\binom{n-1}{d} p^{n-d+2}(4t)^{d-1}$ options.
		Set 
		\[
		t=\ceil{ \frac{\varepsilon p^{\frac{d-2}{d-1}}}{10  n^{\frac{d+1}{d-1}}} },
		\]
		and  then $n\binom{n-1}{d} p^{n-d+2}(4t)^{d-1} < \frac{\varepsilon}{2} \cdot p^n$. On the other hand,  the number of possibilities to choose the $ \alpha_i$'s is $p^n (1 - o(1))$ where  the term $o(1)$ tends to zero as $p$ tends to infinity. This implies that at least  $p^n(1-\frac{\varepsilon}{2} -o(1))$  of the evaluation sets are not bad. Hence,  for large enough $p$  a fraction of at least $1-\varepsilon$  of the possible RS codes satisfy the condition in Proposition \ref{con:repair-condition}. 
		Lastly, for such an RS code, the total incurred bandwidth during the repair of any symbol is
		\[
		d \cdot \log\left( \ceil{ \frac{p}{t}} \right) \leq d \cdot \log \left( p^{\frac{1}{d-1}} \cdot 10 \varepsilon^{-1}  \cdot n^{\frac{d+1}{d-1}} \right) = \frac{d}{d-1} \log(p)  +O_{n,\varepsilon}(1), \;
		\]		
		bits\footnote{{As discussed above, the actual bandwidth is $d \cdot \ceil{\log\left( \ceil{ \frac{p}{t}} \right)}$, but the outer ceiling adds at most $d$ bits, which are absorbed in the $O_{n,\varepsilon}(1)$ term, as $d<n$. Hence,  it does not affect the final bandwidth.}}, as needed. 
	\end{proof}

	\subsection{Outperforming linear repair schemes  over field extensions}
	In Section \ref{sec:n-2-existence}  we obtained the first  existence result of an RS code that can be asymptotically repaired over a prime field. On the other hand,  any linear repair scheme over a prime field is the trivial repair, i.e., repairing with any $k$ helper nodes that transmit their entire symbol, where $k$ is the dimension of the code. Indeed, prime fields have no proper subfields. The only option for  a helper node is to transmit its entire symbol. Therefore, the previous section's result can be viewed as the first example wherein a nonlinear repair scheme outperforms all linear ones.

	A natural question to consider is whether this phenomenon could be found over field extensions or it is solely a property of prime fields. In this section, we show by relying on the result of Section \ref{sec:n-2-existence} and a simple extension of  repair schemes over some field to   repair schemes over  its field extensions, that this phenomenon also occurs over field extensions. More precisely, we exhibit the existence of RS codes over $\mathbb{F}_{p^m}$ for infinitely many primes $p$ and integers $m$ that have a nonlinear repair scheme that outperforms (in terms of the total incurred bandwidth) any linear repair scheme.   
	The outperformance of the nonlinear scheme over linear ones follows from the fact
	that the latter requires the transmission of $\Fp$-symobls, whereas  there is no such constraint on the former.
	
	The following  proposition  shows that an RS code over  a  field $\mathbb{F}$ and   evaluation points in a subfield of $\mathbb{F}$ can be repaired by invoking any repair scheme of the  RS code  over the subfield and the same evaluation set. In other words, a repair scheme for an $[n,k]_p$ RS code can be  translated to  a repair scheme for an $[n,k]_{p^m}$ RS code. We note that a similar result was already given   in \cite[Theorem 1]{li2019sub}, but for the sake of  completeness, we state and prove it here again.
	
	\begin{prop}
		\label{pakapaka}
		Let $\alpha_1, \ldots, \alpha_n\in \Fp$ be the evaluation set of an $[n,k]_p$ RS code, and suppose that there exists a repair scheme for node $i$ with  a set $D\subset [n]$ of $d$ helper nodes and  bandwidth of $b$ bits. 
		Then, for every  positive integer $m$  the  $[n,k]_{p^m}$ RS code with the same evaluation set $\alpha_1, \ldots, \alpha_n\in \Fp$   
		has a repair scheme for node $i$ with  the same set $D\subset [n]$ of $d$ helper nodes and  bandwidth of $b\cdot m$ bits.

		
	\end{prop}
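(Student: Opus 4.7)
The plan is to reduce the $\mathbb{F}_{p^m}$-repair problem to $m$ independent instances of the $\mathbb{F}_p$-repair problem by exploiting the fact that the evaluation points lie in the subfield $\mathbb{F}_p$. Fix a basis $\beta_1,\ldots,\beta_m$ of $\mathbb{F}_{p^m}$ over $\mathbb{F}_p$. Given any polynomial $f(x) \in \mathbb{F}_{p^m}[x]$ of degree less than $k$, write each coefficient of $f$ uniquely in this basis, and collect terms to obtain $f(x) = \sum_{j=1}^{m} \beta_j f_j(x)$, where each $f_j(x) \in \mathbb{F}_p[x]$ has degree less than $k$. Because the evaluation points $\alpha_1,\ldots,\alpha_n$ lie in $\mathbb{F}_p$, for every node $r$ we get $f(\alpha_r) = \sum_{j=1}^m \beta_j f_j(\alpha_r)$, so the $\mathbb{F}_p$-coordinates of $f(\alpha_r)$ with respect to the basis are precisely the values $f_j(\alpha_r)$ of codewords of the underlying $[n,k]_p$ RS code.

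Now I would invoke the assumed repair scheme for the $[n,k]_p$ RS code $m$ separate times, once per index $j \in [m]$. For each $j$, each helper node $r \in D$ applies its prescribed function to $f_j(\alpha_r)$ and transmits the resulting message of size at most $b/d$ bits (or more precisely, the scheme as a whole uses $b$ bits across all helpers for each $j$). Summing over $j$, the total communication from all helpers is $b\cdot m$ bits. The decoder, upon collecting these $m$ batches of messages, runs the $\mathbb{F}_p$-decoding function $m$ times to recover $f_1(\alpha_i),\ldots,f_m(\alpha_i)$, and then assembles $f(\alpha_i) = \sum_{j=1}^m \beta_j f_j(\alpha_i)$.

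Correctness follows from the correctness of the original scheme applied to each $f_j$, together with the uniqueness of the basis representation. The only things one has to check carefully are that the decomposition $f = \sum_j \beta_j f_j$ really does give polynomials in $\mathbb{F}_p[x]$ of degree less than $k$ (immediate from writing coefficients in the basis), and that evaluation commutes with the decomposition at points in $\mathbb{F}_p$ (immediate because $\beta_j$ is a scalar pulled out of the evaluation). There is essentially no obstacle here; the proposition is a direct ``parallel repetition'' of the subfield scheme, and the bandwidth bookkeeping is exactly a factor of $m$.
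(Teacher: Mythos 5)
Your proof is correct and follows essentially the same route as the paper: decompose $f=\sum_j \beta_j f_j$ over a basis of $\mathbb{F}_{p^m}$ over $\Fp$, note that the $f_j$ are degree-$<k$ polynomials over $\Fp$ evaluated at the same points, and invoke the $\Fp$-scheme $m$ times for a total bandwidth of $b\cdot m$. No gaps; the extra checks you mention (degree of the $f_j$ and commuting evaluation with the decomposition at points of $\Fp$) are exactly the ones the paper leaves as "readily verified."
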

	\begin{proof}
		Let $\beta_0,\ldots,\beta_{m-1}\in \mathbb{F}_{p^m}$ be a basis of $\mathbb{F}_{p^m}$ over  $\Fp$.
		It can be readily verified that any polynomial $f(x)\in \mathbb{F}_{p^m}[x]$ can be written as
		$$f(x)=\sum_{j=0}^{m-1}f_j(x)\beta_j, \text{ where } f_j(x)\in \Fp[x].$$
		Then, the problem of repairing the value $f(\alpha_i)$ for a polynomial $f(x)\in \mathbb{F}_{p^m}[x]$ of degree less than $k$ boils down to $m$ independent repairs of   $f_j(\alpha_i)$ for $j=0\ldots,m-1$ over $\Fp$. By invoking $m$ times the repair scheme for $\alpha_j,j=0,\ldots,m-1$, we get that the total incurred bandwidth is $b\cdot m$, as needed. 
	\end{proof}

	\begin{prop}
		There are infinitely many primes $p$ and positive integers $m$, and $2\leq d<n$ for which there exist an $[n,2]_{p^m}$ RS code  with a nonlinear repair scheme with $d$ helper nodes  that outperforms any linear repair scheme. 
	\end{prop}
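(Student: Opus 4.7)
The plan is to combine the existence theorem \Cref{thm:2-n-case} with the subfield-lifting procedure of \Cref{pakapaka}, and to compare the resulting bandwidth against the integer-valued lower bound forced on $\Fp$-linear schemes by \Cref{thm:cut-set-bound}. Fix integers $d \geq 3$, $n > d$, and $m \geq 2$ with $(d-1) \nmid m$; this divisibility hypothesis is exactly what creates a gap between the nonlinear bandwidth and every linear one. Pick any $\varepsilon > 0$ and apply \Cref{thm:2-n-case}: for every sufficiently large prime $p$, there exists an $[n,2]_p$ RS code that is asymptotically $[n,2,d]$ MSR, with a nonlinear repair scheme of total bandwidth $b_p = \frac{d}{d-1}\log p + O_{n,d,\varepsilon}(1)$ bits for any failed symbol and any set of $d$ helpers.

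Applying \Cref{pakapaka} to this RS code and the extension $\Fpm$ yields a (nonlinear) repair scheme for the $[n,2]_{p^m}$ RS code with the same evaluation points, of total bandwidth $m \cdot b_p = \frac{dm}{d-1}\log p + O_{n,d,m,\varepsilon}(1)$ bits.

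To lower bound any $\Fp$-linear scheme (and hence any $\Fpm$-linear scheme, which is a fortiori $\Fp$-linear), observe that in any $\Fp$-linear scheme each helper transmits an integer number of $\Fp$-symbols. Viewing the $[n,2]_{p^m}$ RS code as an $[n,2]$ MDS array code over $\Fp$ with subpacketization $\ell = m$, \Cref{thm:cut-set-bound} implies that the total number of transmitted $\Fp$-symbols is at least $\lceil \tfrac{dm}{d-1} \rceil$. Since $dm \equiv m \pmod{d-1}$ and $(d-1) \nmid m$, this ceiling exceeds $\tfrac{dm}{d-1}$ by at least $\tfrac{1}{d-1}$, so every $\Fp$-linear repair scheme has bandwidth at least $\tfrac{dm}{d-1}\log p + \tfrac{1}{d-1}\log p$ bits.

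Finally, the nonlinear bandwidth $\tfrac{dm}{d-1}\log p + O_{n,d,m,\varepsilon}(1)$ is strictly smaller than this linear lower bound as soon as the $O(1)$ overhead is dominated by $\tfrac{1}{d-1}\log p$, which holds for all sufficiently large $p$, since $n, d, m, \varepsilon$ are fixed. The infinitude of primes then produces infinitely many valid $p$. The one point I would double-check is that the $O(1)$ constant coming out of \Cref{thm:2-n-case} depends only on $n, d, \varepsilon$ and not on $p$, which is transparent from the explicit choice of $t$ made in its proof; the rest is a direct numerical comparison.
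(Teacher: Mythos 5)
Your proposal is correct and follows essentially the same route as the paper: lift the $[n,2]_p$ code of \Cref{thm:2-n-case} to $\Fpm$ via \Cref{pakapaka}, then use the cut-set bound together with the integrality of the number of transmitted $\Fp$-symbols and the hypothesis $(d-1)\nmid m$ to force any linear scheme to pay an extra $\bigl(\lceil \tfrac{dm}{d-1}\rceil-\tfrac{dm}{d-1}\bigr)\log p \geq \tfrac{1}{d-1}\log p$ bits, which dominates the $O_{n,d,m,\varepsilon}(1)$ overhead for large $p$. Your explicit verification that the ceiling gap is at least $\tfrac{1}{d-1}$ is a small added detail the paper leaves implicit, but the argument is the same.
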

	
	\begin{proof}
		Fix an $\varepsilon>0$, and  consider the $[n,2]_p$ RS code with evaluation set $\alpha_1,\ldots,\alpha_n$, given in Theorem \ref{thm:2-n-case}.  Let $m$ be a positive integer not divisible by $d-1$, and consider the $[n,2]_{p^m}$ RS code
		with the same evaluation set. By Proposition \ref{pakapaka} there exists a nonlinear repair scheme for this code with bandwidth at most 
		$$m\left(\frac{d}{d-1} \log(p)  +O_{n,\varepsilon}(1)\right).$$
		On the other hand, any linear repair scheme over $\Fp$ requires the transmission of $\Fp$-symbols, then by the cut-set bound \eqref{eq:cut-set} the number of $\Fp$-symbols  transmitted is at least  $\ceil{dm/(d-1)}$. Since $d-1 \nmid m$ then any linear repair scheme transmits at least 
		\begin{equation}
		\label{difference}
		\left( \ceil{\frac{dm}{d-1}}-\frac{dm}{d-1}\right) \log (p)-O_{m,n,\varepsilon}(1),
		\end{equation}
		more bits than the nonlinear repair scheme, as needed. 
	\end{proof}
	Although the improvement of the nonlinear repair scheme over  linear schemes is small and might seem very negligible, the difference \eqref{difference} can be arbitrarily large as we increase the alphabet size $p$. 
	The purpose of this result is to exemplify that nonlinear repair schemes can, in fact, outperform linear schemes even over field extensions and not only over prime fields. We believe that it is possible to exhibit even a more significant gap between the two.

	\section{Explicit constructions of RS codes}
	\label{stam-section}
	In the  previous section, we showed the   existence of $[n,2]_p$ RS codes that are asymptotically MSR. In this section, we turn our focus to explicit constructions of RS codes that have efficient repair schemes.  We present several such constructions by explicitly presenting the evaluation points $\alpha_i$ and the $\gamma_i$'s that define the partitions (functions) computed by the helper nodes, as detailed below.
	
	We begin  with a toy example of an asymptotically $[4,2,3]_p$ MSR RS code. The result will follow  by showing that the explicit $\gamma_i$'s and the evaluation points $\alpha_i$ satisfy the condition of  \Cref{con:repair-condition}.  Then, we continue to present the main construction of this section. Building on the ideas presented in the toy example, we explicitly construct for all $k < d \leq n/2$ an $[n,k]_p$ RS code such that every node admits an asymptotically optimal repair   with many $d$-sets of nodes as helper nodes, although not all of them. Therefore,  the code  falls short of being  asymptotically MSR. 
	We conclude the section with  two more explicit constructions of full-length and folded RS codes that follow  directly from this section's main construction. 
	\subsection{A toy example}
	Assume that we would like to construct a $[4,2]_p$ RS code with asymptotically optimal repair	for the $4$th symbol by invoking Proposition \ref{con:repair-condition}. Then, we need to choose distinct points $\alpha_1, 
	\ldots, \alpha_4\in \Fp$ and $\gamma_1, \ldots, \gamma_3\in \Fp$ for which  Proposition \ref{con:repair-condition} holds for $t = \Omega(\sqrt{p})$. Indeed, for such a $t$, each helper nodes transmits $\frac{1}{2}\log(p)+O(1)$ bits. 
	
	
	Let $p$ be a large enough prime, and let   $\gamma_i :=\alpha_i-\alpha_4$ for $i=1,2,3$, where  
	\[
	\alpha_1 = 0, \alpha_2 = -1, \alpha_3 = \frac{p-1}{2}, \alpha_4 = -(2t + 1)\;,
	\]
	and  $t= \floor{\sqrt{p}/5}$. Then,  by Proposition \ref{con:repair-condition}  the repair of the $4$th symbol is possible if for any polynomial $f\in \Fp[x]$ of degree at most one,  that satisfies $f(\alpha_i)\in \gamma_i\cdot [-t,t]$ for $i=1,2,3$, it holds that $f(\alpha_4)=0$. Let $f$ be such a polynomial, i.e., $\deg(f)\leq 1$ and $f(\alpha_i)\in \gamma_i\cdot [-t,t]$ for $i=1,2,3$, write $f(\alpha_1)=m\cdot \gamma_1$ for some $m\in [-t,t]$, and consider the polynomial $\hat{f}:=m(x-\alpha_4). $  We would like to show that by the above  selection of the $\alpha_i$'s, $f$ is equal to $\hat{f}$ and  in particular $f(\alpha_4)=\hat{f}(\alpha_4)=0$, as needed. 
	Note that $f(\alpha_i),\hat{f}(\alpha_i)\in \gamma_i\cdot  [-t,t]$ for $i=1,2,3$ and $f(\alpha_1)=\hat{f}(\alpha_1)$, then their difference $g:=f-\hat{f}$ satisfies, $\deg(g)\leq 1$, $g(\alpha_1)=0$. Therefore, $g(x)=s(x-\alpha_1)$ for some $s\in \mathbb{F}_p$, and $g(\alpha_i)\in \gamma_i\cdot [-2t,2t]$ for $i=2,3$. To conclude, we will show that $s=0$ and therefore $g$ is the zero polynomial. 
	
	Towards this end, notice first that $s=\frac{g(\alpha_i)}{\alpha_i-\alpha_1}  \in  \frac{\gamma_i}{\alpha_i-\alpha_1} \cdot[-2t,2t]$, for $i=2,3$, therefore 
	\begin{equation}
	\label{wsws}
	s\in \bigcap_{i=2,3}  
	\frac{\gamma_i}{\alpha_i-\alpha_1}\cdot[-2t,2t].
	\end{equation}
	Next, assume that we take a realization of $\mathbb{F}_p$ as all the integers whose absolute value is less than $p/2$, i.e., $F_p=\{0,\pm 1,\pm 2,\ldots, \pm \frac{p-1}{2}\}$, then 
	$$ \frac{\gamma_2}{\alpha_2-\alpha_1}=-2t \text{ and }  \frac{\gamma_3}{\alpha_3-\alpha_1}=\frac{\alpha_3-\alpha_4}{\alpha_3-\alpha_1}=1-\frac{\alpha_4}{\alpha_3}=1-2(2t+1)=-4t-1.
	$$
	Therefore, the following products (over $\mathbb{Z}$) satisfy 
	$$\left|2t\cdot\frac{\gamma_2}{\alpha_2-\alpha_1}\right|,\left|2t\cdot\frac{\gamma_3}{\alpha_3-\alpha_1}\right|<\frac{p}{2}.$$
	This implies that  
	there is no wrap around in the calculation of the sets $\gamma_i/(\alpha_i-\alpha_1)\cdot[-2t,2t]$  in \eqref{wsws}   when viewed as  subsets of $\mathbb{F}_p$, and they are equal to the same sets of products when calculated over $\mathbb{Z.}$
	
	By \eqref{wsws}, $s$ is divisible by $-4t-1$ and by $2t$, and thus, $s$ is divisible by their lcm. Since $-4t-1$ and $-2t$ are coprime, then  
	\begin{equation}
	\label{wsws1}
	\lcm(-4t-1,-2t)= 2t(4t+1)>2t\cdot \left|\frac{\gamma_2}{\alpha_2-\alpha_1}\right|=4t^2.
	\end{equation}
	Now assume to the contrary that $s\neq 0$, then by \eqref{wsws1}, $s$  is a nonzero integer whose absolute value is greater than $4t^2$. We get that the absolute value of $s$ is greater than the maximal absolute value of any element in the set $\frac{\gamma_2}{\alpha_2-\alpha_1}\cdot [-2t,2t]$, and we arrive at a contradiction (since \eqref{wsws} does not hold).
	%
	Therefore, the  repair of $\alpha_4$ is possible, where each helper nodes transmits at most 
	$$\log \left(\ceil{\frac{p}{t}}\right)\leq \frac{1}{2}\log(p)+O(1),$$
	bits, as needed. 
	
	In \cref{sec:repairing-rest} we  show that  by applying again Proposition \ref{con:repair-condition}, the other evaluation points admit an asymptotically optimal repair with $d=3$. Therefore this is an asymptotically $[4,2,3]$ MSR code. 
	
	\subsection{An RS code construction with $k < d \leq n/2$}
	This section presents an explicit construction of RS codes over $\Fp$ with efficient repair, as described next. Let  $d,k$, and $n$ be arbitrary positive integers such that $k< d\leq n/2$. We construct an $[n,k]_p$ RS code with the following repair properties. The set of $n$ nodes can be partitioned into equally sized sets,   of size $n/2$ (for simplicity, assume that $n$ is even, for odd $n$ the size of the sets differ by one), such that any failed node can be optimally repaired by any subset of $d$ helper nodes from the other set, i.e., the set that does not contain the failed node. 
	Clearly, this constraint  provides for each failed node   $\binom{n/2}{d}$ possible helper sets; however, the construction falls short in satisfying the definition of an asymptotically MSR code since not any set of $d$ nodes can serve as a set of helper nodes.
	Another caveat of this construction that we should mention is its low rate, as $k/n\leq 1/2.$ It is an interesting open question whether it is possible to modify this construction to a code without these constraints. 
	
	Before stating and proving the main result of this section, we  state a lemma that provides a lower bound on the least common multiple of  several integers.
	\begin{lemma} \label{lem:lcm-lemma}
		Let $a_1, \ldots, a_{s}$ be positive integers, then 
		\[
		\lcm(a_1, \ldots, a_{s}) \geq \frac{\prod_{i=1}^s a_i}{\prod_{1\leq i < j\leq s} \gcd (a_i, a_j)} \;.
		\]
	\end{lemma}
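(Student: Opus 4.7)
The plan is to clear denominators and prove the equivalent inequality
\[
\lcm(a_1, \ldots, a_s) \cdot \prod_{1\leq i<j\leq s} \gcd(a_i, a_j) \;\geq\; \prod_{i=1}^s a_i,
\]
by comparing $p$-adic valuations at each prime $p$ separately. Using the standard identities $v_p(\lcm) = \max_i v_p(a_i)$ and $v_p(\gcd) = \min_i v_p(a_i)$, and writing $e_i := v_p(a_i) \geq 0$, the problem reduces prime-by-prime to the purely combinatorial inequality
\[
\max_{1 \leq i \leq s} e_i \;+\; \sum_{1 \leq i<j \leq s} \min(e_i, e_j) \;\geq\; \sum_{i=1}^s e_i
\]
on nonnegative integers $e_1, \ldots, e_s$.

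To verify this last inequality, I would, without loss of generality, sort $e_1 \leq e_2 \leq \cdots \leq e_s$. Then $\max_i e_i = e_s$, and for each $i \leq s-1$ the value $e_i$ is the minimum in exactly $s-i$ of the pairs, so $\sum_{i<j} \min(e_i, e_j) = \sum_{i=1}^{s-1} (s-i)\, e_i$. The target inequality becomes
\[
e_s + \sum_{i=1}^{s-1}(s-i)\, e_i \;\geq\; e_s + \sum_{i=1}^{s-1} e_i,
\]
which rearranges to $\sum_{i=1}^{s-1}(s-i-1)\, e_i \geq 0$, manifestly true since $s-i-1 \geq 0$ and $e_i \geq 0$ for every $i \leq s-1$.

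I do not foresee a genuine obstacle. For $s=2$ the claim reduces to the classical identity $\lcm(a,b)\gcd(a,b)=ab$, and the general case, once transferred to valuations, is the short combinatorial identity above. The only point worth double-checking is the count of how many ordered pairs $(i,j)$ with $i<j$ have $\min(e_i, e_j) = e_i$ after sorting, which produces the coefficient $s-i$; once this count is in hand, the rest is formal.
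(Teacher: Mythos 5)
Your proof is correct, but it takes a genuinely different route from the paper. The paper argues by induction on $s$: it writes $\lcm(a_1,\ldots,a_s)=\lcm\bigl(a_1,\lcm(a_2,\ldots,a_s)\bigr)=\frac{a_1\cdot\lcm(a_2,\ldots,a_s)}{\gcd\left(a_1,\lcm(a_2,\ldots,a_s)\right)}$, applies the induction hypothesis to $\lcm(a_2,\ldots,a_s)$, and controls the denominator via the auxiliary fact $\gcd(b,a_1\cdots a_s)\leq\gcd(b,a_1)\cdots\gcd(b,a_s)$ (proved as a separate claim). You instead localize at each prime and reduce the statement to the valuation inequality $\max_i e_i+\sum_{i<j}\min(e_i,e_j)\geq\sum_i e_i$, which you verify by sorting; your counting of the coefficient $s-i$ is right (the pair sum is over unordered pairs, so relabeling by sorting is harmless, and ties pose no issue since $\min$ is symmetric). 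Your argument buys a one-shot, self-contained proof that needs no auxiliary gcd-of-product lemma and makes transparent exactly how much slack the inequality has (the term $\sum_{i=1}^{s-1}(s-i-1)e_i$); the paper's induction stays entirely at the level of integer identities, avoiding valuations, which keeps it closer in spirit to the elementary $s=2$ identity it generalizes. Either proof would serve.
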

	The proof of the Lemma \ref{lem:lcm-lemma} is given in the Appendix (Section \ref{proof-of-the-lemma}). Next, we present the explicit construction of the RS code with the claimed properties. 

	\begin{cnst} 
		\label{cnst}
		Let $k,d,n$ be  integers such that $k < d \leq n/2$ and $n$ is even.  
		Let $r := \lfloor p^{\frac{1}{d-k+1}} \rfloor$, where $p$ is a large enough prime, and 
		define the $[n,k]_p$ RS with $n$ evaluation points $\alpha_i$ where 
		$$\alpha_i=\begin{cases}
		i& i\in [n/2]\\
		r+i-\frac{n}{2}&  i\in [n/2+1,n].
		\end{cases}$$
	\end{cnst}
	
	The following theorem shows that the constructed RS code admits an asymptotically optimal repair for any of its nodes (symbols). 
	\begin{thm}
		\label{thm:n-k-d-allmost-msr}
		The $[n,k]_p$ RS code given in Construction \ref{cnst} admits a partition of its nodes to two sets $[n/2]$ and $[n/2 + 1,n]$ such that any node from one set can be repaired by  any set of $d$ helper nodes from the other set, with  total bandwidth  of  $d/(d-k+1) \log(p) + O_{n,k,d}(1)$ bits. 
	\end{thm}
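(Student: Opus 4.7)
The plan is to invoke \Cref{con:repair-condition} with $\gamma_j := \alpha_j - \alpha_i$ for every helper $j \in \mathcal{D}$. The two clusters $\{1,\dots,n/2\}$ and $\{r+1,\dots,r+n/2\}$ play symmetric roles via the involution $x\mapsto r+n/2+1-x$, so I may assume without loss of generality that the failed node lies in $\{1,\dots,n/2\}$ and that every helper $\alpha_j$ lies in $\{r+1,\dots,r+n/2\}$. I set $t := \lfloor c\cdot p^{(d-k)/(d-k+1)}\rfloor$ for a small constant $c = c(n,k,d)$ to be fixed later; the per-helper bandwidth is then $\log\lceil p/t\rceil = \tfrac{1}{d-k+1}\log p + O_{n,k,d}(1)$, giving the claimed total. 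It remains to verify the hypothesis of \Cref{con:repair-condition}: for every $f\in\mathbb{F}_p[x]$ of degree $<k$ with $f(\alpha_j)\in \gamma_j[-t,t]$ for every $j\in\mathcal{D}$, we have $f(\alpha_i)=0$.

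Given such an $f$, write $f(\alpha_j)=\gamma_j m_j$ with $m_j\in[-t,t]$. Fix any $(k{-}1)$-subset $\mathcal{J}=\{j_1,\dots,j_{k-1}\}\subseteq\mathcal{D}$, and let $L\in\mathbb{F}_p[x]$, $\deg L < k-1$, be the Lagrange interpolation through $(\alpha_{j_\ell},m_{j_\ell})_\ell$. The polynomial $\hat f(x) := (x-\alpha_i)L(x)$ has $\hat f(\alpha_i)=0$ and $\hat f(\alpha_{j_\ell})=\gamma_{j_\ell} m_{j_\ell}=f(\alpha_{j_\ell})$, so $g:=f-\hat f$ has degree $<k$ and vanishes on $\mathcal{J}$. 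Hence $g(x)=s\cdot P_\mathcal{J}(x)$ with $P_\mathcal{J}(x):=\prod_\ell(x-\alpha_{j_\ell})$ and a unique scalar $s\in\mathbb{F}_p$, and since $\hat f(\alpha_i)=0$, the target $f(\alpha_i)=0$ reduces to $s=0$.

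For each of the $d-k+1$ remaining helpers $j\in\mathcal{D}\setminus\mathcal{J}$, evaluating $g$ at $\alpha_j$ yields
\[
s\cdot P_\mathcal{J}(\alpha_j)\;=\;\gamma_j\bigl(m_j-L(\alpha_j)\bigr)\quad\text{in }\mathbb{F}_p.
\]
The two-cluster structure is crucial here: the $k$ points of $\mathcal{J}\cup\{j\}$ all lie in an integer window of length $\le n/2$, so every Lagrange coefficient in $L(\alpha_j)$ is a rational with integer numerator and denominator of absolute value $O_{n,k}(1)$. Clearing a common denominator $V = V(n,k,d)\in \mathbb{Z}$, the identity above becomes the congruence $V\cdot s\cdot P_\mathcal{J}(\alpha_j) \equiv \gamma_j\cdot M_j \pmod{p}$, where $M_j\in\mathbb{Z}$ satisfies $|M_j|=O_{n,k,d}(t)$. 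Its right-hand side has magnitude $O_{n,k,d}(rt)=O_{n,k,d}(cp)$, which for $c$ small enough is $< p/2$, so the congruence lifts with a bounded correction: $V\cdot s\cdot P_\mathcal{J}(\alpha_j)=\gamma_j M_j+K_j p$ in $\mathbb{Z}$, where $|K_j|=O_{n,k,d}(1)$.

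The final step is to combine these $d-k+1$ relations through \Cref{lem:lcm-lemma}. The pairwise gcds obey $\gcd(\gamma_j,\gamma_{j'})\mid \gamma_j-\gamma_{j'}=\alpha_j-\alpha_{j'}$ of magnitude $\le n/2$, so $\prod_{j<j'}\gcd(\gamma_j,\gamma_{j'}) = O_{n,k,d}(1)$ while $\prod_j\gamma_j\ge (r-n/2)^{d-k+1}=\Omega_{n,k,d}(p)$. \Cref{lem:lcm-lemma} then yields $\lcm(\gamma_j:j\in\mathcal{D}\setminus\mathcal{J})=\Omega_{n,k,d}(p)$. Rewriting each lifted identity as $V s P_\mathcal{J}(\alpha_j)\equiv K_j p\pmod{\gamma_j}$ with $|K_j|=O_{n,k,d}(1)$ and applying the Chinese Remainder Theorem, the signed representative $\tilde s\in(-p/2,p/2]$ of $s$ must lie in a union of $O_{n,k,d}(1)$ residue classes modulo an integer of size $\Omega_{n,k,d}(p)$; choosing $c$ small enough relative to the hidden constants forces this union to contain only $\tilde s = 0$. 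The main technical obstacle is precisely this final constant-counting: uniformly controlling the $K_j$-corrections across all $j$ so that a single universal $c(n,k,d)$ suffices for every failed node and every admissible helper set $\mathcal{D}$.
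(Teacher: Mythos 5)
Your overall template (reduce to \Cref{con:repair-condition}, peel off a degree-$<k$ interpolant, reduce to showing a single scalar vanishes, and finish with \Cref{lem:lcm-lemma}) is the right one, but there is a genuine gap, and it originates in your choice $\gamma_j=\alpha_j-\alpha_i$. The paper does \emph{not} use these naive partitions: it takes $\gamma_i=(-1)^k(\alpha_i-\delta)\prod_{j\neq i}(\alpha_j-\alpha_i)$ for the first $k-1$ helpers and $\gamma_i=(\alpha_i-\delta)\prod_{j<k}(\alpha_i-\alpha_j)$ for the rest, precisely so that all Lagrange denominators cancel identically. With that choice, the scalar $a$ (your $s$) is shown to lie, \emph{as a field element with no wrap-around}, in $(\alpha_m-\delta)\cdot[-Ct,Ct]$ for every $m\in[k,d]$; hence its integer representative is simultaneously small (at most $C\xi p<\varepsilon p$) and divisible over $\mathbb{Z}$ by each $\alpha_m-\delta$, so by their lcm, which is $\Omega(p)$, forcing $a=0$. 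With your $\gamma_j$'s the Lagrange coefficients are ratios of $O_{n,k}(1)$-size integers, and in $\mathbb{F}_p$ such ratios are not small; you are forced to clear the denominator $V$, and the quantity you can control is $V\tilde s P_{\mathcal J}(\alpha_j)$, whose representative can be as large as $O_{n,k}(p)$. That is exactly why the unknown corrections $K_jp$ appear, and it destroys the "small integer divisible by a large lcm" punchline.

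Concretely, the last step of your argument does not go through: from $V\tilde sP_{\mathcal J}(\alpha_j)\equiv K_jp\pmod{\gamma_j}$ with $|K_j|=O_{n,k,d}(1)$ you only learn that $\tilde s$ lies in a union of $O_{n,k,d}(1)$ residue classes modulo an integer of size $\Omega_{n,k,d}(p)$, i.e.\ that $\tilde s$ is one of boundedly many candidate values in $(-p/2,p/2]$. Nothing forces those candidates to be $0$: they are determined by the inverses of $VP_{\mathcal J}(\alpha_j)$ modulo $\gamma_j$ and by the unknown vector $(K_j)$ (which itself depends on $\tilde s$), and they do not move, let alone collapse to $\{0\}$, when you shrink the constant $c$ — shrinking $c$ only shrinks $t$ and hence $|M_j|$, which you have already discarded by reducing modulo $\gamma_j$. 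So what you call "the main technical obstacle" is not a constant-chase; it is the missing core of the proof, and closing it is exactly what the paper's engineered $\gamma_j$'s accomplish (they make the analogue of $s$ an integer combination with $O_{n,k,d}(1)$ coefficients of elements of $(\alpha_m-\delta)\cdot[-t,t]$, with no mod-$p$ ambiguity). Note also that since you changed the $\gamma_j$'s, you would in any case have to prove the hypothesis of \Cref{con:repair-condition} for \emph{your} partitions; it is not even clear that the scheme with $\gamma_j=\alpha_j-\alpha_i$ is valid for $k\geq 3$, and the paper's proof gives no such guarantee. (Your bandwidth accounting and the use of \Cref{lem:lcm-lemma} with $\gcd(\gamma_j,\gamma_{j'})\mid\alpha_j-\alpha_{j'}$ are fine and match the paper.)
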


	\begin{proof}
		The result will follow by showing that for carefully designed $\gamma_i$'s, \Cref{con:repair-condition} holds true.
		Let 
		$\delta \in [r + 1,r + n/2]$ be the evaluation point of the failed node and note that the proof for the other  evaluation points  is almost identical and thus is omitted.
		Let $\alpha_1,\ldots,\alpha_d \in [n/2]$ be a set of $d$ distinct helper nodes, and define the $\gamma_i$'s as

		$$
		\gamma_i=
		\begin{cases}
		(-1)^k (\alpha_i - \delta) \prod_{\substack{ j=1\\ j\neq i}}^d (\alpha_j - \alpha_i)  &  1\leq i \leq k-1\\

		(\alpha_i - \delta) \prod_{j=1}^{k-1} (\alpha_i - \alpha_j)  & k\leq i\leq d.
	\end{cases}
	$$
	
	Consider   a polynomial $f(x)$ of degree less than $ k$ such that $f(\alpha_j) := \beta_j \in \gamma_j \cdot T$ for all $1\leq j \leq d$,  where $T = [-t,t]$ and $t$ is a positive integer to be determined later. We will show that in such a case   $f(\delta) = 0$ 
	and hence  \Cref{con:repair-condition} holds. Let $h(x)$  be the polynomial of degree less than $k$ defined by the $k$ constraints,  $h(\delta) = 0$ and $h(\alpha_i) = f(\alpha_i)= \beta_i$ for all $1\leq i \leq k-1$. By the Lagrange interpolation formula one can easily verify that $h(x)$ takes the following form
	\[
	h(x) = \sum_{i=1}^{k-1} \frac{x-\delta}{\alpha_i - \delta} \prod_{\substack{j=1\\ j\neq i}}^{k-1} \frac{x - \alpha_j}{\alpha_i - \alpha_j}\beta_i.
	\]
	Next, define $g(x) := f(x) - h(x)$ and note that since $g(x)$ is a polynomial of degree less than $k$  that vanishes at $\alpha_j$ for all $1\leq j \leq k-1$,  it takes the following form 
	\begin{equation} \label{eq:g-def}
	g(x) = a \prod_{j = 1}^{k - 1}(x - \alpha_j),\;
	\end{equation}
	for some $a\in \mathbb{F}_p$. We wish to show that $a = 0$ which implies that $g \equiv 0$ and $f(\delta) = 0$, as needed.
	For $m\in [k, d]$ 
	\begin{equation*}
	g(\alpha_m) = f(\alpha_m) - h(\alpha_m) 
	= \beta_m - \sum_{i=1}^{k-1} \frac{\alpha_m-\delta}{\alpha_i - \delta} \prod_{\substack{j=1\\ j\neq i}}^{k-1} \frac{\alpha_m - \alpha_j }{\alpha_i - \alpha_j } \beta_i \;.
	\end{equation*}
	Combined with \eqref{eq:g-def} we have
	\[
	a = \frac{\beta_m}{\prod_{j = 1}^{k - 1}(\alpha_m - \alpha_j)} - \sum_{i=1}^{k-1} \frac{\alpha_m-\delta}{\alpha_i - \delta}\frac{1}{\alpha_m - \alpha_i} \prod_{\substack{j=1\\ j\neq i}}^{k-1} \frac{1}{\alpha_i - \alpha_j} \beta_i \;.
	\]
	Since $\beta_j  \in \gamma_j \cdot T$ for $j=1,\ldots, d$ then 
	\[
	a \in (\alpha_m - \delta)\cdot  T - (\alpha_m - \delta)  \sum_{i=1}^{k-1} \prod_{\substack{j=k\\ j\neq m}}^{d} (\alpha_j - \alpha_i)\cdot  T\;,
	\]
	i.e., $a$ belongs to a set which is a sum of $k$ sets. Equivalently we can write 
	\begin{equation}
	\label{set}
	a \in (\alpha_m - \delta)\cdot \left( T - \sum_{i=1}^{k-1} \prod_{\substack{j=k\\ j\neq m}}^{d} (\alpha_j - \alpha_i)\cdot T \right) \;.
	\end{equation}
	Consider a realization of $\Fp$
	as all the integers whose absolute value is less than $p/2$, i.e., $F_p=\{0,\pm 1,\pm 2,\ldots, \pm \frac{p-1}{2}\}$, and consider the set in \eqref{set}
	as a subset of $\mathbb{Z}$, where all the multiplications and additions are done over $\mathbb{Z}$. We claim that  the absolute value of any of its elements is at most $|\alpha_m-\delta|tk(n/2)^{d-k}$. 
	Indeed, $\alpha_j \in [n/2]$ for each $j=1,\ldots, d$ and therefore $\left|a_{j} - a_{i}\right| \leq n/2$. We conclude that the set in \eqref{set} is contained in the set 
	$$
	(a_m - \delta)\cdot [- Ct, Ct],
	$$
	where $C = C_{n,k,d}$ is a positive constant that depends only on $n,k$ and $d$.  Let $t = \xi p^{1 - \frac{1}{d - k + 1}} $, where $\xi$ is a small positive constant which will be determined later. 
	Next, if $\xi$ is small enough, then for any $m=k,\ldots, d$ the absolute value of an element of the set $(a_m - \delta)\cdot  [- Ct, Ct]$ is less than $p/2$, and therefore there is no wrap around when it is viewed as an element of $\Fp.$ To conclude, by \eqref{set} 		
	the integer $a$ is divisible by $\alpha_{m} - \delta$ for $m=k,\ldots, d$, hence  it is divisible by their least common multiple. By \Cref{lem:lcm-lemma}, and the fact that for distinct $\alpha_i, \alpha_j \in [n/2]$, it holds that $\gcd(\alpha_i-\delta, \alpha_j-\delta)=\gcd(\alpha_i-\delta, \alpha_j-\alpha_i) \leq n/2$, then 
	\[
	\lcm (\alpha_{k} - \delta, \ldots, \alpha_{d} - \delta) \geq \frac{|(\alpha_{k} - \delta) \cdots(\alpha_{d} - \delta)|}{\left(\frac{n}{2}\right)^{\binom{d-k+1}{2}}}=\Omega(\delta^{d - k + 1})=\Omega(r^{d - k + 1}) = \Omega (p),
	\]
	since $k,n$ and $d$ are constants with respect to $p$. 
	Thus, there exists a positive constant $\varepsilon = \varepsilon_{n,k,d} < 1$ such that $|a| > \varepsilon p $ or $a = 0$. 
	On the other hand, recall that $a \in (\alpha_m - \delta) \cdot [- Ct, Ct]$ and thus, for small enough $\xi$ we get that 
	$\left| a \right| < \varepsilon p$. 
	Thus, it must be that $a = 0$, which implies that $g\equiv 0$ and in particular $g(\delta)=f(\delta) = 0$. Therefore, by \Cref{con:repair-condition}, it is possible to repair the failed symbol node $\delta$, as needed. 
	The total incurred bandwidth by the scheme is
	$ d\cdot  \log(p/t) = d\log(p^{1/(d-k+1)}) + d \log(1/\xi)$ and the results follows immediately since $\xi$ depends only on $n,k$, and $d$.
\end{proof} 
Next, we shall make use of \Cref{cnst} to construct a  full-length RS code over $\Fp$ that has good repair properties. Namely, we show that every code symbol has at least $\Omega(p)$ distinct helper  sets  that enable an asymptotically  optimal repair.
\begin{thm}
	\label{thm:full-length-rs}
	Let $d > k$ be positive integers. Let $p$ be a large enough prime and  consider the full length $[p,k]_p$ RS code, Then, every failed code symbol has  $\Omega(p)$ distinct sets of helper nodes of size $d$ that can repair it with bandwidth at most $d/(d-k+1) \log(p) + O_{k,d}(1)$.
\end{thm}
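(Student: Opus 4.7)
The plan is to produce a single template repair configuration via Theorem \ref{thm:n-k-d-allmost-msr} and then inflate it into $\Omega(p)$ distinct configurations by exploiting the affine automorphism group of the full-length RS code.

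First I would invoke Theorem \ref{thm:n-k-d-allmost-msr} with the auxiliary parameters $n_0 := 2d$, $k$, $d$ to obtain a concrete template: in the $[n_0,k]_p$ RS code of Construction \ref{cnst}, the node with evaluation point $\delta_0 := r+1$ is repairable using the unique $d$-subset $S_0 := \{1,\ldots,d\}$ of the opposite side, with bandwidth $\frac{d}{d-k+1}\log(p)+O_{k,d}(1)$ bits. The witnessing $\gamma_i$'s from Proposition \ref{con:repair-condition} constrain only polynomials in $\Fp[x]$ of degree less than $k$ and are oblivious to which other evaluation points are present, so the same $\gamma_i$'s also certify a valid repair of $\delta_0$ from $S_0$ in the full-length code $[p,k]_p$.

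Next I would exploit the affine invariance of the full-length RS code: for any $\sigma \in \agl(\Fp) = \{\sigma_{a,b}(x)=ax+b : a\in \Fp\setminus\{0\},\, b\in \Fp\}$, substitution $f \mapsto f\circ\sigma$ preserves the degree, so a valid Proposition \ref{con:repair-condition} configuration $(\delta, S, \{\gamma_s\}_{s\in S})$ transports under $\sigma$ to the configuration $(\sigma(\delta), \sigma(S), \{\gamma_{\sigma^{-1}(s')}\}_{s'\in\sigma(S)})$ with the same bandwidth. Fixing an arbitrary failed node $\delta \in \Fp$, the affine maps with $\sigma(\delta_0)=\delta$ form the one-parameter family $\sigma_a(x):=a(x-\delta_0)+\delta$ for $a\in\Fp\setminus\{0\}$, producing $p-1$ candidate helper sets $\sigma_a(S_0)=\delta+a\cdot(S_0-\delta_0)$ for $\delta$.

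The main step requiring a genuine argument is that a large fraction of these $p-1$ candidates are distinct, and this is where the real work lies. Two scalars $a\neq a'$ yield $\sigma_a(S_0)=\sigma_{a'}(S_0)$ iff $a/a'$ lies in the multiplicative stabilizer $H \leq \Fp\setminus\{0\}$ of the set $T:=S_0-\delta_0 \subset \Fp$. For $p$ large enough we have $r+1>d$, so $\delta_0\notin S_0$ and therefore $0\notin T$; any $h\in H$ fixing some element of $T$ must then equal $1$, so $H$ acts freely on $T$ and $|H|$ divides $|T|=d$. This yields at least $(p-1)/d=\Omega(p)$ distinct helper sets for $\delta$, and the routine check that $\delta\notin\sigma_a(S_0)$ (immediate from $\delta_0\notin S_0$) completes the proof.
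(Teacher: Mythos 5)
Your proposal is correct and follows essentially the same route as the paper: fix the template repair of a node of Construction \ref{cnst} with $n=2d$ (valid in the full-length code since the condition of Proposition \ref{con:repair-condition} only involves the failed node and its helpers), transport it by affine maps carrying the template failed node to $\delta$, and count distinct images of the helper set. Your explicit parametrization by the scalar $a$ together with the free-action/divisibility bound $|H|\mid d$ is just a hands-on version of the paper's orbit--stabilizer argument using the sharp transitivity of the stabilizer $G_\delta$ in $\agl(\Fp)$, yielding the same $(p-1)/d=\Omega(p)$ count and the same bandwidth.
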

\begin{proof}
	Let $G=\{ax+b:a,b\in \Fp, a\neq 0\}$ be the affine general linear group acting on $\Fp.$ It is well-known that $G$ is sharply $2$-transitive, and therefore  the subgroup $G_\delta=\{ax+b\in G: a\delta +b=\delta\}$ that stabilizes a point $\delta \in \Fp$ is sharply transitive on $\Fp\backslash \{\delta\}.$  
	
	Let $\delta\in \Fp$ be the failed node. Let $h\in G$ be an affine transformation such that $h(1)=\delta$ and set   $A:=h([r+1,r+d])=\{h(a):a\in [r+1, r+d]\}$, where 
	$r:= \lfloor p^{\frac{1}{d-k+1}} \rfloor$.
	We claim that $G_\delta^A=\{g(A):g\in G_{\delta}\}$, the orbit of the set $A$ under the action
	of $G_{ \delta}$ is  of size  $\Omega(p)$, and each set in the orbit forms a set of helper nodes for repairing the failed node $\delta$. 
	Indeed, it is well-known that the size of the orbit satisfies 
	$$|G_\delta^A|=\frac{|G_\delta|}{|G_{\delta,A}|}=\frac{p-1}{|G_{\delta,A}|},$$
	where $G_{\delta,A}=\{g\in G_\delta:g(A)=A\}$ and the second equality follows from the fact that $G_\delta$ is sharply transitive on $\Fp\backslash \{\delta\}$. Again by the sharp transitivity of $G_\delta$   there exists  exactly one affine transformation $g\in G_{\delta,A}$ such that $g(a_1)=a_2$ for any two $a_1,a_2\in A$, therefore, $|G_{\delta,A}|\leq |A|=d$ and 
	$|G_\delta^A|\geq (p-1)/d=\Omega(p).$    
	
	Next, let be $B\in G_\delta^A$, where $g(A)=B$, then the affine transformation $g\circ h$ satisfies $g\circ h(1)=\delta$ 
	and $g\circ h([r+1,r+d])=B$. Next, consider the punctured code of the full-length RS code, defined by the $d+1$ evaluation points $\delta$ and $B$. Since RS codes are invariant under linear transformation, the code can be viewed as a RS code with evaluation set $1$ and $[r+1,r+d]$ due to the affine transformation $g\circ h.$ Hence, repairing the failed node $\delta$ with helper nodes $B$ is equivalent to repairing the failed node $1$ with helper nodes $[r+1,r+d]$. By Theorem   		\ref{thm:n-k-d-allmost-msr} this can be done with bandwidth at most $d/(d-k+1)\log(p)+O_{k,d}(1)$ bits.  
	Note that we invoked Theorem \ref{thm:n-k-d-allmost-msr} with the punctured code $[2d, k]_p$ RS code, and thus the term $O_{n,k,d}(1)$ is in this case $O_{k,d}(1)$.
	Hence, the set $B\in G_\delta^A$ is indeed a valid set of helper nodes, and the result follows.  
\end{proof}

\subsection{Repairing by any $d$ helper nodes}
In the quest of constructing an RS code over a prime field that is asymptotically MSR, we give a construction of a folded RS code that follows from Construction \ref{cnst}. This new construction falls short in achieving this goal in two aspects compared with  Construction \ref{cnst}. First, the bandwidth is not asymptotically optimal, and second, the alphabet is not of  prime order, albeit  very close to it. However, it allows us to repair each failed node with any $d\geq k$ helper nodes and very small bandwidth.   
The  construction is derived from  \Cref{cnst} by a  simple folding operation, and it was observed by Amir Shpilka (who kindly allowed us to include it here). 

\begin{cor} \label{cor:folded-rs}
	Let $p$ be a large enough prime. Let $k,d$, and $n$ be positive integers such that $2k < d < n$, then there exists an $[n,k]_{p^2}$ folded-RS code such that any node can be repaired using \emph{any} $d$ helper nodes with bandwidth of at most $(2d/(d-2k+1)) \log(p)+O_{n}(1)$ bits.
\end{cor}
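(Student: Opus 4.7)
The plan is to obtain the folded RS code by instantiating Construction \ref{cnst} with both the length and the dimension doubled, and then grouping evaluations in pairs. Apply Construction \ref{cnst} with parameters $(n',k',d') = (2n,\, 2k,\, d)$; the hypothesis $2k < d < n$ of the corollary gives exactly $k' < d' \le n'/2$, and one obtains a $[2n,\,2k]_p$ RS code whose evaluation set is $\{1,\dots,n\} \cup \{r+1,\dots,r+n\}$ for $r = \lfloor p^{1/(d-2k+1)} \rfloor$. Now fold the code by declaring the symbol at position $i \in [n]$ to be the pair $(f(i),\, f(r+i)) \in \mathbb{F}_p^2$, which after fixing an $\mathbb{F}_p$-basis of $\mathbb{F}_{p^2}$ is identified with a single element of $\mathbb{F}_{p^2}$. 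Since the codeword count is $p^{2k} = (p^2)^k$, this yields an $[n,k]_{p^2}$ folded RS code.

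For the repair scheme, suppose the folded node $i \in [n]$ has failed; its symbol encodes the pair $(f(i),\, f(r+i))$ for the underlying message polynomial $f$ of degree less than $2k$. Let $J \subseteq [n]\setminus\{i\}$ be an \emph{arbitrary} set of $d$ helper folded nodes. Each helper $j \in J$ stores $(f(j),\, f(r+j))$, and so the $d$ helper nodes collectively provide $d$ evaluations in the left half $\{1,\dots,n\}\setminus\{i\}$ and $d$ evaluations in the right half $\{r+1,\dots,r+n\}\setminus\{r+i\}$ of the underlying $[2n,2k]_p$ RS code. Recover $f(i)$ by invoking Theorem \ref{thm:n-k-d-allmost-msr} on the underlying code using the $d$ right-half helpers (which lie in the part opposite to $i$): this costs at most $\frac{d}{d-2k+1}\log(p) + O_n(1)$ bits. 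Symmetrically, recover $f(r+i)$ using the $d$ left-half helpers at the same cost. Running the two schemes in parallel, each helper folded node transmits the concatenation of its two messages, giving total bandwidth $\frac{2d}{d-2k+1}\log(p) + O_n(1)$, as claimed.

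The main task here is bookkeeping rather than new analysis: verifying that $(n',k',d')=(2n,2k,d)$ lies in the admissible range of Construction \ref{cnst} and Theorem \ref{thm:n-k-d-allmost-msr}, and observing that the $d$ helper folded nodes naturally supply exactly the one-sided helper data required by each of the two opposite-side sub-repairs. The one conceptual point worth stressing is that folding converts the restriction in Theorem \ref{thm:n-k-d-allmost-msr} (helpers must come from the opposite part) into complete freedom in choosing the helper set in the folded code, at the modest price of doubling the effective dimension from $k$ to $2k$ (which is why the bandwidth denominator becomes $d-2k+1$ instead of $d-k+1$).
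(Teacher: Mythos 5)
Your proposal is correct and follows essentially the same route as the paper: instantiate Construction \ref{cnst} with parameters $(2n,2k,d)$, fold the evaluation pairs $(f(i),f(r+i))$ into $\mathbb{F}_{p^2}$-symbols, and invoke Theorem \ref{thm:n-k-d-allmost-msr} twice to recover the two halves of the failed super-symbol, doubling the bandwidth to $\frac{2d}{d-2k+1}\log(p)+O_n(1)$. Your explicit cross-pairing (repair $f(i)$ from the right-half values $f(r+j)$, $j\in J$, and $f(r+i)$ from the left-half values $f(j)$, $j\in J$) is exactly the assignment the theorem requires, and you state it more carefully than the paper does.
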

\begin{proof}
	Consider  the $[2n,2k]_p$ RS code provided in \Cref{cnst}, and recall that it is defined by the evaluation points $[1,n]\cup [r+1,r+n]$. Next, define the $[n,k]_{p^2}$ folded-RS as follows. For a polynomial $f\in \Fp[x]$ of degree less than $2k$ define a codeword with $i$th super-symbol to be $(f(i),f(i+r))$ for $i=1,\ldots,n$. 
	
	
	If  the $i$th symbol fails, then any set of $d$ 
	nodes can repair it. Indeed, fix a set of helper nodes $D\subseteq [n],|D|=d$ then the values $f(i)$ and  $f(r+i)$ can be repaired by the values $f(j),j\in D$ and 
	$f(j+r),j\in D$, respectively. The claim on the total bandwidth follows from the bandwidth guaranteed in Theorem \ref{thm:n-k-d-allmost-msr}, and the result follows. 
\end{proof}
Note that  by the cut-set bound, the lower bound on the repair bandwidth for codes with these parameters is at least  $(2d/(d - k + 1)) \log(p)$ bits. Hence, the guaranteed repair bandwidth given in Corollary \ref{cor:folded-rs} is not asymptotically optimal,   yet, for large values of $d$ compared with  $k$, it is very  close to it.

The folded RS construction presented above can be viewed as an array code with subpacketization $L=2$ over the field $\Fp$. We are aware of only one more code construction with such a small subpacketization level over its prime field $\Fp$ and with an efficient repair scheme. We state this result next,  rephrased  to fit this  context.

\begin{thm} \cite[Theorem 10]{guruswami2017repairing}
	\label{guru-wootters-res}
	Let $p$ be a prime. For any $n \leq 2(p - 1)$, there is an $[n,k]_{p^2}$ RS code and linear repair scheme such that any failed node can be repaired using the $n-1$ remaining nodes with bandwidth $$\left(\frac{3n}{2} - 2\right) \log (p).$$
\end{thm}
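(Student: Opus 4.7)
The plan is to realize the stated repair scheme via the Guruswami--Wootters linear repair framework, specialized to the quadratic extension $\mathbb{F}_{p^2}/\Fp$. In this framework a linear repair scheme for an $[n,k]_{p^2}$ RS code at a failed coordinate $\alpha$ is equivalent to choosing two dual codewords $g^{(1)},g^{(2)}$ whose entries at position $\alpha$ form an $\Fp$-basis of $\mathbb{F}_{p^2}$; the bandwidth contributed by each helper node $\beta\neq\alpha$ is then $\dim_{\Fp}\mathrm{span}_{\Fp}\bigl(g^{(1)}_\beta,g^{(2)}_\beta\bigr)\cdot\log p$, which is either $\log p$ (dependence) or $2\log p$ (independence). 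Since the dual of a generalized RS code of dimension $k$ is, up to nonzero coordinate multipliers $v_j$ that do not affect $\Fp$-dimensions at individual coordinates, parameterized by polynomials of degree less than $n-k$, it suffices to exhibit two such polynomials satisfying the desired span properties.

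Fix any $\zeta\in\mathbb{F}_{p^2}\setminus\Fp$ and take the evaluation set inside $E:=\Fp^*\sqcup\zeta\cdot\Fp^*\subseteq\mathbb{F}_{p^2}$, a set of size $2(p-1)$ which is naturally partitioned into two punctured $\Fp$-lines. For a given $n\leq 2(p-1)$, choose any $n$ evaluation points from $E$, balanced across the two cosets. Given a failed node $\alpha$, choose the dual polynomials depending on its coset: take $(p_1(x),p_2(x))=(1,x)$ if $\alpha\in\zeta\cdot\Fp^*$, and $(p_1(x),p_2(x))=(\zeta,x)$ if $\alpha\in\Fp^*$. Both have degree at most $1<n-k$, so they correspond to genuine dual codewords. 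A direct check shows that in either case $(p_1(\alpha),p_2(\alpha))$ is an $\Fp$-basis of $\mathbb{F}_{p^2}$. More importantly, for any helper node $\beta\neq\alpha$, the pair $(p_1(\beta),p_2(\beta))$ is $\Fp$-linearly dependent precisely when $\beta$ lies in the coset of $E$ opposite to $\alpha$, and $\Fp$-linearly independent when $\beta$ lies in the same coset as $\alpha$. This reduces to the elementary observation that $\Fp^*$ and $\zeta\cdot\Fp^*$ each lie in a single one-dimensional $\Fp$-subspace of $\mathbb{F}_{p^2}$.

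Summing contributions, with $n/2$ helpers from the coset opposite $\alpha$ (cost $\log p$ each) and $n/2-1$ helpers from the coset containing $\alpha$ (cost $2\log p$ each), yields a total bandwidth of $\bigl(n/2\bigr)\log p+\bigl(n/2-1\bigr)\cdot 2\log p=\bigl(3n/2-2\bigr)\log p$, exactly as claimed (with an analogous count, up to rounding, for odd $n$ or unbalanced choices of the evaluation set). The only genuine steps to verify are the per-node independence/dependence claims, each a one-line computation in the two coset cases; there is no substantive obstacle, only the bookkeeping of the multipliers $v_j$ in the standard generalized-RS dual parameterization, which as noted is absorbed freely since multiplication by a nonzero element of $\mathbb{F}_{p^2}^*$ is an $\Fp$-linear bijection and preserves coordinatewise $\Fp$-dimensions.
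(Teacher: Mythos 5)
Your proposal is correct, and it is essentially the original Guruswami--Wootters argument for this theorem (which the present paper only cites, without reproducing a proof): their dual-codeword/trace characterization of linear repair, an evaluation set inside the two cosets $\Fp^*\cup\zeta\cdot\Fp^*$, and degree-at-most-one check polynomials whose per-helper span dimensions give $(n/2)\log p+(n/2-1)\cdot 2\log p=(3n/2-2)\log p$; the handling of the GRS multipliers $v_j$ is also right, since multiplication by a fixed nonzero element of $\mathbb{F}_{p^2}$ preserves $\Fp$-span dimensions. The only implicit hypotheses are $k\le n-2$ (so that degree $1<n-k$), which the paper itself notes right after the theorem, and a balanced/even-$n$ choice of the evaluation set for the stated constant, which you already flag as a rounding issue.
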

By setting $d = n-1$ in \Cref{cor:folded-rs} we obtain a  bandwidth of
\[
\frac{2(n-1)}{n-2k} \log(p) + O_n(1) \;,
\]
which is significantly smaller. However, \Cref{cor:folded-rs} requires that  $k < n/2$ while \Cref{guru-wootters-res} holds  for any $k\leq n-2$.

\section{An improved lower bound on the bandwidth}
\label{sec:new-cut-set-bound}
All of the constructions presented in this paper do not achieve the cut-set bound \eqref{eq:cut-set}, and the incurred  bandwidth is larger  than it   by an additive factor that depends on $k$ or $n$. Hence, we ask if this is indeed necessary, i.e., whether the cut-set bound is not tight for RS codes over prime fields, and if not, can it be improved?

We answer this question  in the affirmative by showing that the bandwidth is at least $d\log(p)/(d-k+1) +\Omega(d\log(k)/(d-k+1))$
which is an improvement over the cut-set bound by an additive factor of  $\Omega(d\log(k)/(d-k+1))$. 
Equivalently, by the terminology of leakage-resilient of  Shamir Secret Sharing (See Section \ref{SSS} for definitions),  this result provides an improved lower bound on the amount of information that needs to be leaked from the shares to the  adversary to recover the secret fully. 

Before stating and proving the main  result, we shall recall the following well-known theorem from additive combinatorics used in the proof.
\begin{thm} [Cauchy-Davenport inequality] \cite{cauchy1812recherches,davenport1935addition}
	If $p$ is a prime and $A,B\subseteq \mathbb{Z}_p$ then,
	$\left|A + B\right| \geq \min(|A| + |B| - 1, p)$.
\end{thm}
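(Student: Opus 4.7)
The plan is to prove the inequality by induction on $|A|$, via the classical Davenport--Dyson transform. The base case $|A| = 1$ is immediate, since then $A + B$ is a translate of $B$ and has size $|B| = |A| + |B| - 1$. For the inductive step I may assume $|A + B| < p$ (otherwise the bound $\min(|A|+|B|-1, p)$ is trivially attained) and aim to show $|A + B| \geq |A| + |B| - 1$.

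Assume $|A| \geq 2$. After translating each set (which preserves $|A|$, $|B|$, and $|A + B|$) I may take $0 \in A \cap B$ and pick some nonzero $a \in A$. \emph{Here primality enters crucially:} since $\mathbb{Z}_p$ has no proper nontrivial subgroup, the sequence $0, a, 2a, \ldots$ exhausts $\mathbb{Z}_p$, so there exists an integer $i$ with $i a \in B$ and $(i+1) a \notin B$. Translating $B$ by $-i a$, I may further assume $0 \in B$ and $a \notin B$, so that the two sets are genuinely non-nested.

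Now set $A' := A \cap B$ and $B' := A \cup B$. These satisfy $|A'| + |B'| = |A| + |B|$ and $|A'| < |A|$ (since $a \in A \setminus A'$). The key point is the inclusion $A' + B' \subseteq A + B$: for any $x \in A \cap B$ and $y \in A \cup B$, the sum $x + y$ lies in $A + B$ whether $y \in A$ (using $x \in B$) or $y \in B$ (using $x \in A$). Applying the induction hypothesis to $(A', B')$, whose first set is strictly smaller, yields $|A + B| \geq |A' + B'| \geq |A'| + |B'| - 1 = |A| + |B| - 1$, completing the induction.

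The main conceptual obstacle, and the only step where primality is genuinely used, is producing the boundary translate of $B$ for which $0$ lies in both sets while $a$ lies only in $A$. This is precisely what can fail in general abelian groups, where nontrivial subgroups obstruct the iteration $0, a, 2a, \ldots$ from ever leaving $B$ and thus obstruct Cauchy--Davenport itself. Once this transform is in place, the remainder of the argument is a routine set-theoretic computation combined with the inductive hypothesis.
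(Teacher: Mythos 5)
Your proof is correct: it is the classical Davenport--Dyson transform argument, and each step checks out --- the reduction to the case $|A+B|<p$, the use of primality (no proper nontrivial subgroups) to find $i$ with $ia\in B$ and $(i+1)a\notin B$, the inclusion $A'+B'\subseteq A+B$ for $A'=A\cap B$, $B'=A\cup B$, and the induction on $|A|$ (note only that the inductive hypothesis formally yields $|A'+B'|\geq\min(|A'|+|B'|-1,p)$, and the $\min$ is resolved correctly precisely because you assumed $|A+B|<p$; also, as in the paper's statement, nonemptiness of $A$ and $B$ is implicitly assumed). There is nothing in the paper to compare against: the authors state Cauchy--Davenport as a cited classical result from the additive-combinatorics literature and give no proof, so your self-contained argument via the transform is a perfectly standard and valid way to supply one.
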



\begin{thm}
	\label{thm-improved-cut-set-bound}
	Consider a $k$-dimensional RS code over a prime field $\Fp$, and  assume  that there is a repair scheme for node $\alpha \in \Fp$  by the $d$ helper nodes 
	$\alpha_1, \ldots, \alpha_d \in \Fp$ where each helper node transmits the same amount of information. 
	Then, the bandwidth for each helper node is at least  
	\[
	\frac{\log(kp) - 1}{d- k +1} \;
	\]
	bits. 
\end{thm}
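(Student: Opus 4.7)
My plan is to translate the repair scheme into a partitioning problem over $\Fp$ and to apply the Cauchy-Davenport inequality in order to exploit additive structure, pushing the cut-set argument further. Let $s$ denote the common image size of each message function $\mu_i:\Fp\to[s]$, so the per-node bandwidth equals $\log s$. Each $\mu_i$ corresponds to a partition of $\Fp$ into cells $P_i^{(m)}:=\mu_i^{-1}(m)$, and by Observation~\ref{obs1} the repair condition is equivalent to asking that in every non-empty cell of the product partition restricted to the codewords, the value $f(\alpha)$ is constant.

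I would parameterize polynomials $f$ of degree less than $k$ by pairs $(v,y)\in\Fp^{k-1}\times\Fp$, where $v_i=f(\alpha_i)$ for $i\in[k-1]$ and $y=f(\alpha)$. Lagrange interpolation at the $k$ points $\alpha_1,\ldots,\alpha_{k-1},\alpha$ provides nonzero constants $\eta_{ij},\rho_j\in\Fp$ (for $i<k$ and $j\in[k,d]$) such that
\[
f(\alpha_j)=\sum_{i=1}^{k-1}\eta_{ij}\,v_i+\rho_j\, y.
\]
For each tuple $m=(m_1,\ldots,m_{k-1})\in[s]^{k-1}$ and each $y\in\Fp$, let $U_{m,y}\subseteq[s]^{d-k+1}$ be the set of message tuples $(\mu_k(f(\alpha_k)),\ldots,\mu_d(f(\alpha_d)))$ realized as $v$ ranges over $\prod_{i<k}P_i^{(m_i)}$. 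Every non-empty cell of the product partition determines a unique triple $(m,y,\mathbf{m}_{\ge k})$, since $y$ is forced by the full message tuple via the repair function. Consequently the number $N$ of realized cells satisfies $N=\sum_{m,y}|U_{m,y}|\le s^d$, and the theorem will follow from a sufficiently strong lower bound on $N$.

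The heart of the argument is that lower bound. Projecting $U_{m,y}$ onto the coordinate $j=k$ gives $|U_{m,y}|\ge\beta_k(y)$, where $\beta_k(y)$ counts the cells of $\pi_k$ intersected by the shifted set $A_{m,k}+\rho_k y$ with $A_{m,k}:=\sum_{i<k}\eta_{ik}\,P_i^{(m_i)}$. A first application of Cauchy-Davenport to this Minkowski sum yields $|A_{m,k}|\ge\sum_{i<k}|P_i^{(m_i)}|-(k-2)$. A direct double-count then exchanges the sums to give $\sum_y\beta_k(y)=\sum_{P\in\pi_k}|P-A_{m,k}|$, and a second application of Cauchy-Davenport bounds each term by $|P|+|A_{m,k}|-1$; summing over the $s$ cells of $\pi_k$ produces $\sum_y\beta_k(y)\ge p+s(|A_{m,k}|-1)$. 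Finally, summing over $m\in[s]^{k-1}$ and using $\sum_{m_i}|P_i^{(m_i)}|=p$, I obtain $N\ge k\,p\,s^{k-1}-(k-1)s^k$. Combining with $N\le s^d$ yields the key inequality $s^{d-k+1}+(k-1)s\ge kp$, and a brief case analysis depending on whether $s\le kp/(2(k-1))$ (so the second term is negligible) or $s$ is very large (so $\log s$ is already above the target) delivers $\log s\ge(\log(kp)-1)/(d-k+1)$.

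The main obstacle is that the Cauchy-Davenport estimate $|P-A_{m,k}|\ge|P|+|A_{m,k}|-1$ only holds when $|P|+|A_{m,k}|\le p+1$; for atypically large cells of $\pi_k$ it saturates at $p$. I would handle this either by a preliminary reduction that replaces $\mu_k$ by a partition whose cells are balanced (losing only an additive constant in $\log s$), or by showing directly that if some $P\in\pi_k$ is of size close to $p$ then node $k$ conveys almost no information, forcing the remaining $d-1$ helpers to carry essentially the full repair bandwidth, which in the symmetric regime again yields the claimed bound via a separate and simpler estimate.
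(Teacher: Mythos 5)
Your proposal is correct in substance and arrives at essentially the paper's key inequality, but through a different counting scheme. The paper's proof is more local: by pigeonhole it fixes a single cell of size at least $p/s$ in each of the first $k$ partitions, applies Cauchy--Davenport once to the resulting $k$-fold dilated sumset (the dilation coefficients being the nonzero first column of an inverse Vandermonde matrix, the analogue of your nonzero $\eta_{ij},\rho_j$) to show that the polynomials passing through these $k$ cells attain at least $kp/s-k$ distinct values at the failed node $\alpha$, and then notes that the messages of the remaining $d-k$ helpers must separate these values, giving $s^{d-k}\geq kp/s-k$ and hence $s^{d-k+1}\geq kp/2$. You instead count realized cells of the full product partition, averaging over \emph{all} cells of the first $k-1$ partitions and over the repaired value $y$, and use Cauchy--Davenport twice (on $A_{m,k}$ and on each difference set $P-A_{m,k}$) together with the double-counting identity $\sum_y\beta_k(y)=\sum_{P\in\pi_k}\left|P-A_{m,k}\right|$, obtaining $s^{d-k+1}+(k-1)s\geq kp$ --- the same bound up to a lower-order term. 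The paper's route buys brevity (one application of Cauchy--Davenport, no averaging) and sidesteps your balancedness/empty-cell concern, since only one large cell per partition is ever needed; your route uses all cells rather than a pigeonholed one, which makes it closer to an averaging argument and potentially more flexible (e.g., toward non-symmetric bandwidths). The caveats you flag are genuine but no more severe than those the paper itself glosses over: the paper likewise drops the $\min(\cdot,p)$ truncation in Cauchy--Davenport and silently uses $kp/s-k\geq kp/(2s)$, i.e., $s\leq p/2$. Do note that your closing case $s>kp/(2(k-1))$ only clears the target when $d>k$ (for $d=k$ and $k\geq 3$ the stated bound exceeds $\log p$ and cannot hold, an edge case the paper's argument also does not survive), and that your penultimate inequality should be run with the numbers of nonempty cells of each partition, which only strengthens the conclusion since these are at most $s$.
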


\begin{proof}
	For $i=1,\ldots,d$, let   $\mu_i: \Fp \rightarrow [s]$ be the function calculated by the helper node (symbol) $\alpha_i$, and let 
	$\mathcal{A}_i$ be the partition of $\Fp$ defined by $\mu_i$, i.e., $$\mathcal{A}_i=\{\mu_i^{-1}(m):m\in [s]\}.$$
	
	It is clear that since the $\mu_i$'s define a repair scheme, then for 
	any choice of sets $A_1 \in \mathcal{A}_1, \ldots, A_d \in \mathcal{A}_d$, all the polynomials that pass through $(\alpha_1, A_1), \ldots, (\alpha_d, A_d)$  attain the same value at  $\alpha$. Note that it might be possible that there  are no such polynomials.

	Fix for $i=1,\ldots, k $, 
	$A_i\in \mathcal{A}_i$ of size at least  $p/s$, and denote by $\mathcal{F}$  the set of polynomials of degree less than $k$, that pass through $(\alpha_1, A_1), \ldots, (\alpha_k, A_k)$. Since the code dimension is $k$, the size of $\mathcal{F}$  is exactly  $\prod_{i=1}^{k} \left|A_i\right|$. 
	Next, set  $U=\{f(\alpha):f\in \mathcal{F}\}$ to be  the set of  values attained by the polynomials in $\mathcal{F}$ at $\alpha$. We have the following claim on the size of $U$. 
	\begin{claim}
		$\left| U \right| \geq k p/s - k$.
	\end{claim}
	\begin{proof}

		Let $f = \sum_{i=0}^{k-1}f_i(x - \alpha)^i$ be a polynomial in $\mathcal{F}$,  and note that $f(\alpha)=f_0.$
		By abuse of notation  let also  $f = (f_0,\ldots, f_{k-1})$ to be the vector of coefficients of $f$.  
		Let  $V$ be the Vandermonde matrix defined by the $k$ distinct elements  $\alpha_1 - \alpha, \ldots, \alpha_k - \alpha$, i.e., 
		$V_{ij}=(\alpha_j-\alpha)^{i-1}$, where $i,j=1,\ldots, k$, and let 
		$A_1\times \ldots\times  A_k=\{(a_1,\ldots,a_k):a_i\in A_i)\}$. Thus,    $ f \in  \{wV^{-1}:w\in A_1\times  \ldots \times A_{k}\}$ for any $f\in \mathcal{F}$, and in particular 
		\begin{equation} \label{eq:sum-set-cut-set}
		U = \sum_{i = 1}^{k}  A_{i}(V^{-1})_{i1}\;.
		\end{equation}
		We claim that the first column of $V^{-1}$ has nonzero entries. Indeed, the $i$th row of the inverse Vandermonde matrix corresponds to the coefficients of a polynomial $p(x)$ that vanishes at $\alpha_j - \alpha$ for $j \in [k] \setminus \{i\}$ and $p(\alpha_i - \alpha) = 1$. Hence, the first entry of the row equals $p(0)$ which is clearly nonzero since $p(x)$ is a nonzero polynomial of degree less than $k$ with   $k-1$ nonzero roots. 
		
		Now, by applying $k$ times the Cauchy-Davenport inequality we get that  $\left| U \right| \geq \sum_{j = 1}^{k} \left| A_{j} \right| - k \geq k p/s - k$.
	\end{proof}
	
	We conclude that there are at least $k p/s - k$ polynomials $g_i$ in $\mathcal{F}, i=1,\ldots, k p/s - k$ that attain distinct values at $\alpha.$ 
	Define a mapping $\mathcal{F}\rightarrow  \mathcal{A}_{k+1}\times \ldots\times \mathcal{A}_{d}$ by $f\mapsto (A_{k+1}',\ldots,A_{d}')$, where $A_{i}'$ is the set in the partition $\mathcal{A}_{i}$ that contains $f(\alpha_i).$ Clearly, this map has to be injective on the polynomials $g_i$, otherwise the repair scheme would not be able to repair the failed node. 
	Hence \[
	s^{d - k} \geq k \cdot \frac{p}{s} - k  \geq \frac{kp}{2s}. 
	\] 
	By rearranging, we get that 
	\[
	\log(s) \geq \frac{\log(kp) - 1}{d-k+1} \;.
	\]
\end{proof}

The following theorem is an inverse theorem to the Cauchy-Davenport inequality, which characterizes the sets that attain the bound with equality. 
\begin{thm} \cite[Vosper's theorem]{vosper1956critical}
	\label{vosper-thm}
	Let $A,B\subseteq \mathbb{Z}_p$ where $|A|,|B| \geq 2$ and $|A+B| \leq p-2$. Then $|A+B| = |A| + |B| - 1$ if and only if $A$ and $B$ are arithmetic progressions with the same step.
\end{thm}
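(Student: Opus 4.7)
The plan is to handle the two implications separately. The ``if'' direction is a direct calculation: suppose $A = a + d \cdot \{0, 1, \ldots, |A|-1\}$ and $B = b + d \cdot \{0, 1, \ldots, |B|-1\}$ for some common step $d \in \mathbb{F}_p \setminus \{0\}$. Then $A + B = (a+b) + d \cdot \{0, 1, \ldots, |A|+|B|-2\}$, which has exactly $|A|+|B|-1$ distinct elements because the condition $|A+B| \leq p - 2$ forces $|A|+|B|-1 \leq p-1$, ruling out wrap-around modulo $p$.

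For the harder direction, I plan to invoke Dyson's $e$-transform. Given $e \in \mathbb{F}_p$ such that $(B+e) \cap A \neq \emptyset$, set $A_e = A \cup (B+e)$ and $B_e = B \cap (A-e)$. A direct inclusion-exclusion yields $|A_e| + |B_e| = |A| + |B|$, and one verifies that $A_e + B_e \subseteq A + B$. Consequently $|A_e + B_e| \leq |A|+|B|-1 = |A_e|+|B_e|-1$, and the Cauchy--Davenport inequality forces equality, so the transformed pair inherits the extremal hypothesis. By choosing $e$ appropriately, the transform strictly decreases $|B|$ while keeping $|B_e| \geq 2$, and we iterate until reaching a base case.

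The base case $|B| = 2$ is where the arithmetic-progression structure is exposed. Writing $B = \{b_1, b_2\}$, we have $A + B = (A + b_1) \cup (A + b_2)$, and the extremal condition $|A+B| = |A|+1$ forces $|(A+b_1) \cap (A+b_2)| = |A| - 1$, equivalently $|A \cap (A + (b_2 - b_1))| = |A| - 1$. A short combinatorial argument then shows that $A$ must be an arithmetic progression with common difference $b_2 - b_1$; pulling this structure back through the chain of $e$-transforms will in turn force $(A, B)$ in the original problem to be arithmetic progressions with a common step.

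The main obstacle, and where the proof's delicacy lies, is the bookkeeping of the reduction: one must choose $e$ carefully so that $|B_e|$ strictly decreases yet stays at least $2$, and then track how the AP structure of the reduced pair $(A_e, B_e)$ translates back to the original $(A, B)$ (in particular, the common step of the APs must match across the transform). The hypothesis $|A + B| \leq p - 2$ is used here to avoid two pathological regimes: when $A+B = \mathbb{F}_p$, the Cauchy--Davenport lower bound becomes vacuous and the inductive equality $|A_e + B_e| = |A_e|+|B_e|-1$ may fail; and when $|A+B| = p-1$, additional non-AP extremizers (up to complementation) can appear. Excluding these is precisely what allows the clean AP characterization.
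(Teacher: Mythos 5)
The paper does not prove this statement at all: it is Vosper's classical theorem, quoted directly from \cite{vosper1956critical} and used only in the discussion after the improved lower bound to motivate why arithmetic progressions are the right partitions. So the comparison can only be with the literature, and the question is whether your sketch stands on its own; as written it does not. The parts you carry out are correct: the ``if'' direction, the verification that the Dyson transform satisfies $|A_e|+|B_e|=|A|+|B|$ and $A_e+B_e\subseteq A+B$ (hence, via Cauchy--Davenport and $|A+B|\le p-2$, that criticality is inherited), and the base case $|B|=2$, where $|A\cap(A+(b_2-b_1))|=|A|-1$ does force $A$ to be a single run, i.e.\ an AP with difference $b_2-b_1$ (using $A\neq\mathbb{Z}_p$, which follows from $|A+B|\le p-2$).

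The genuine gaps are exactly the two steps you defer. First, the existence at each stage of an $e$ with $2\le|B_e|<|B|$ is not automatic: a priori every $e\in A-B$ could give $|B\cap(A-e)|\in\{1,|B|\}$ (for instance if all differences $a-b$ were pairwise distinct), and excluding this for a critical pair is itself a lemma that must be proved, not a matter of choosing $e$ ``appropriately.'' Second, and more seriously, the pull-back is the actual content of the theorem rather than bookkeeping: knowing that $A\cup(B+e)$ and $B\cap(A-e)$ are arithmetic progressions with a common step does not in any direct way imply that $A$ and $B$ are, even with all the cardinality identities in hand, and you offer no argument for this inverse step. This is precisely why the known proofs (Vosper's original argument, Nathanson's treatment, Hamidoune's isoperimetric method) do not simply run the $e$-transform down to $|B|=2$ and invert the chain; they interleave the transform with additional structural or duality lemmas, e.g.\ playing $(A,B)$ against the complement $\mathbb{Z}_p\setminus(A+B)$ to first show that $A+B$ is an AP and only then deducing the structure of $A$ and $B$. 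As it stands, your proposal establishes the easy direction, the preservation properties of the transform, and the base case, but the step that actually yields Vosper's conclusion is missing.
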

By examining the improved bound's proof, it is clear that   a large set $U$ in the proof of Theorem \ref{thm-improved-cut-set-bound}  would  imply a strong lower bound on the bandwidth. Considering the extreme case where we replace the size of $U$ with the trivial lower bound of $p/s$, we recover   the cut-set bound. Therefore,  the  improvement follows from sumsets' expansion phenomenon over prime fields, exemplified by the     Cauchy-Davenport inequality. 
On the other hand, it is known that the cut-set bound is tight over large field extensions, which implies that the trivial lower bound of $p/s$ is, in fact, tight. And indeed, field extensions do contain nontrivial subsets whose sumset do not exhibit any expansion, i.e.,  subsets $A, B$ such that $|A+B| = |A|$.  
Hence, we arrive at the following conclusion. If one wants to construct efficient repair schemes for RS codes over prime fields, then the size of the set $U$ should have little expansion as possible. Identifying the structure of such sets (a.k.a. sets with small doubling constant) is a well-studied problem in additive combinatorics,  known as the   \emph{inverse sum set problem}. Over prime fields, Vosper's theorem (Theorem \ref{vosper-thm}) shows that the only sets that attain the  Cauchy-Davenport lower bound with equality are arithmetic progressions 
with the same step. This fact lies at the core of all the constructions given in this paper, as arithmetic progressions define all the functions computed by the helper nodes with the same step. For example, consider the $[n,2]$ RS code given in   \Cref{sec:n-2-existence}, and assume  that we would like to repair the  $n$th symbol. Then, a simple computation shows that by the partitions defined for the repair scheme, the set $U$ takes the following form  
\[
U = \frac{(\alpha_1 - \alpha_n) (\alpha_2 - \alpha_n)}{\alpha_1 - \alpha_2} (A - B) 
\]
where $A$ and $B$ are both   arithmetic progressions of size $t$ and step $1$, and therefore $|U| = |A| + |B| - 1$, which is as small as possible. 

As a final remark, one might ask  whether a random repair scheme is expected to have low bandwidth. By a random scheme, we mean that the function computed by  each helper node is randomly picked among all  functions with a fixed range size, which  is equivalent to picking a random partition of the field to a fixed number of sets. One can  verify that in such a case, with high probability, the size of $U$ is as large as possible since the sumset of two random subsets is large with high probability. Thus, the event of picking an efficient repair scheme is improbable, and one needs to carefully construct the repair scheme to obtain low bandwidth.


\section{Connection to leakage-resilient Shamir's Secret Sharing}
\label{SSS}
Shamir's Secret Sharing (SSS), which was first introduced in \cite{shamir1979share} is a fundamental cryptographic primitive that provides a secure method to   distribute a secret among   different parties, and it shares a lot of similarities with RS codes, as they both  rely on the same algebraic object, evaluations of bounded degree polynomials \cite{mceliece1981sharing}. 
The formal definition of SSS is  as follows. Given a secret $s\in \mathbb{F}$ to be distributed, a reconstruction threshold $k>0$ and $n$ parties (identified as $n$ distinct elements $\alpha_i\in \mathbb{F}$),  a dealer randomly selects a polynomial $f\in \mathbb{F}[x]$ of degree less than $k$ such that $f(0)=s$, and sends to party $i$ the share $f(\alpha_i).$  It is easy to verify that any $k$ parties can reconstruct the polynomial $f(x)$ and therefore recover the secret $f(0)=s$, although the shares of any $k-1$ parties reveal no information about the secret.  


Inspired by the work of Guruswami and Wootters \cite{guruswami2017repairing} on  repairing  RS codes,  
recently Benhamouda, Degwekar,   Ishai,  and  Rabin
\cite{benhamouda2018local} considered the question  of local leakage-resilience of secret sharing schemes and, in particular, SSS. 
In this setting, an adversary who wishes to learn the secret can  apply an arbitrary leakage function of small image size to each party's share. The question is how much information on the secret is leaked to him by doing so. Hence, efficient repair schemes of RS codes can be viewed in the context of secret sharing as a very poor local leakage resistance of SSS since the adversary can fully determine the secret.   
For example,  in \cite{guruswami2017repairing} it was shown that there are full-length RS codes over field extensions such that even one bit of information from each  non-failed node suffices to repair the failed node. Equivalently, in the terminology of  local leakage resilience of SSS,  the adversary can learn the secret by applying on the shares leakage functions  whose output is only a single bit. In this paper, we showed that RS codes over prime fields also exhibit the same kind of phenomenon, which translates to the fact that SSS is not local leakage over prime fields for some parameters regime.   
On the other hand, \cite{benhamouda2018local} showed that SSS is leakage resilient for some parameter regimes over prime fields. Hence, at first glance, it might seem that the results presented in this paper and the results of \cite{benhamouda2018local} contradicting one another. Of course, this is not  true,  and in this section, we would like to give a clear picture of how these two sets of results align with each other. To this end, we recall some definitions from \cite{benhamouda2018local}.

Assume that the adversary has full access to shares $s_i,i \in \Theta$ for some $\Theta \subseteq [n]$, and for the remaining shares $s_i$ for $i\notin \Theta$ he has an access to $\tau_i(s_i)$, where $\tau_i$ is an arbitrary  leakage function that leaks exactly $m$ bits. In other words, the range of the function $\tau_i$ is of  size $2^m$.
Formally, the adversary has access and therefore learns
\[
\Leak = \left(\left(s_i\right)_{i\in \Theta}, \left(\tau_i(s_i) \right)_{i\in [n]\setminus \Theta} \right).
\]
\begin{defi}
	A secret sharing scheme is said to be $(\theta, m, \varepsilon)$-local leakage resilient if for any subset $\Theta \subset [n]$, where $|\Theta|=\theta$, for every choice of leakage functions $\tau_i,i\notin \Theta$ that leak $m$ bits,  and for every pair of secrets $s, s'$, it holds that
	\[
	\textup{SD}\left( \{\Leak({\bf s}) : {\bf s} \leftarrow \textup{Share}(s) \} , \{\Leak({\bf s'}) : {\bf s'} \leftarrow \textup{Share}(s') \}\right) \leq \varepsilon
	\]
	where $\textup{SD}$ is the statistical distance and $\textup{Share}(s)$ is a random choice of shares that correspond to the secret $s$.
\end{defi}

Next, we state the results of \cite{benhamouda2018local} that are relevant to us.  
The first result applies to when a constant number of bits $m$ is leaked from each share.
\begin{thm} \label{thm:secret-sharing-leakage-cons-rate} \cite[Corollary 4.12]{benhamouda2018local}
	If $m = O(1)$, $\theta = O(1)$, and $n$ goes to infinity, there exists $\alpha < 1$, such that the Shamir's secret sharing scheme with $n$ players and threshold $k\geq \alpha n$ is $(\theta, m, \varepsilon)$-local leakage  resilient where $\varepsilon = 2^{-\Omega(n)}$
\end{thm}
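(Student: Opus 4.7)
The plan is to bound the statistical distance via Fourier analysis over $\mathbb{F}_p$, leveraging the MDS property of Reed-Solomon (and hence its dual). First, I would fix a leaking set $\Theta$ of size $\theta$ and arbitrary leakage functions $\tau_i : \mathbb{F}_p \to \{0,1\}^m$ for $i \notin \Theta$. Conditioned on the secret $s$ and on the revealed shares $(f(\alpha_j))_{j \in \Theta}$, the remaining evaluation vector $c = (f(\alpha_i))_{i \notin \Theta}$ is uniformly distributed on an affine subspace $s_0(s) + L \subset \mathbb{F}_p^{n-\theta}$, where $L$ is a punctured Reed-Solomon-type code of dimension $k - \theta - 1$. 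By the MDS property, its dual $L^\perp$ has minimum distance $d^\perp = k - \theta$, which is $\Omega(n)$ under the assumption $k \geq \alpha n$ with $\alpha > 0$ and $\theta = O(1)$.

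Next, I would expand the leakage probability mass $D_s(y) = \Pr[\tau_i(c_i) = y_i \,\forall i \notin \Theta]$ in a Fourier series over $\mathbb{F}_p^{n-\theta}$. Writing $g_i^{(y_i)}(x) := \mathbb{1}[\tau_i(x) = y_i]$ and using the standard identity that the expectation of $h(c) = \prod_i g_i^{(y_i)}(c_i)$ over a uniform element of $s_0 + L$ equals $\sum_{v \in L^\perp} \widehat{h}(v)\,\chi(\langle v, s_0 \rangle)$, the zero-frequency term cancels between the two secrets and we obtain
\[
D_s(y) - D_{s'}(y) = \sum_{v \in L^\perp \setminus \{0\}} \bigl(\chi(\langle v, s_0(s)\rangle) - \chi(\langle v, s_0(s')\rangle)\bigr) \prod_{i \notin \Theta} \widehat{g_i^{(y_i)}}(v_i).
\]
Summing $|D_s(y) - D_{s'}(y)|$ over $y$ and applying the triangle inequality gives
\[
\textup{SD}(D_s, D_{s'}) \leq \sum_{v \in L^\perp \setminus \{0\}} \prod_{i : v_i \neq 0} \eta_i(v_i),
\]
where $\eta_i(v_i) := \sum_{y_i} |\widehat{g_i^{(y_i)}}(v_i)|$ captures the per-coordinate Fourier mass of the leakage.

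The final step is to bound $\eta_i(v_i)$ uniformly in $v_i \neq 0$ and in the leakage function. Since $\tau_i$ has at most $2^m$ fibers partitioning $\mathbb{F}_p$, Cauchy-Schwarz and Plancherel yield $\eta_i(v_i) \leq \rho$ for some constant $\rho = \rho(m) < 1$ depending only on $m$. Because every nonzero $v \in L^\perp$ has at least $d^\perp = \Omega(n)$ nonzero coordinates, each summand is at most $\rho^{\Omega(n)}$; since $|L^\perp| = p^{n-k+1} \leq p^{(1-\alpha)n + O(1)}$, choosing $\alpha$ close enough to $1$ makes the total at most $2^{-\Omega(n)}$. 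The main obstacle is the per-coordinate bound $\eta_i(v_i) \leq \rho(m)$: it must be uniform over all $v_i \neq 0$ and all leakage functions with image size $2^m$, and the slack $1 - \rho(m)$ must outweigh the dual code's cardinality in the final counting, which is precisely why the constant-$m$, large-$\alpha$ regime is needed.
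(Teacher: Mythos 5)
Note first that the paper you are working from does not prove this statement at all: it is quoted directly from Benhamouda, Degwekar, Ishai, and Rabin \cite{benhamouda2018local} (their Corollary 4.12) and is used only as background for the discussion of leakage resilience in Section 6. So the only meaningful comparison is with the proof in that cited work, and at the structural level your sketch does reconstruct its Fourier-analytic framework correctly: conditioning on the secret and on the $\theta$ fully revealed shares leaves the remaining shares uniform on a coset of an MDS code $L$ of dimension $k-\theta-1$ whose dual has minimum distance $k-\theta$, and Poisson summation turns the statistical distance into a sum over $L^\perp\setminus\{0\}$ of products of per-coordinate Fourier masses of the leakage fibers. Those steps are sound.

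Two of your steps, however, have genuine gaps. First, the crucial per-coordinate bound $\eta_i(v_i)\le\rho(m)<1$ does not follow from ``Cauchy--Schwarz and Plancherel.'' Cauchy--Schwarz gives $\eta_i(a)\le 2^{m/2}\bigl(\sum_{y}|\widehat{g_i^{(y)}}(a)|^2\bigr)^{1/2}$, and there is no Plancherel identity over the output alphabet at a fixed frequency $a$; taking the partition into two ``half-circle'' fibers (and $2^m-2$ empty ones) gives $\sum_y|\widehat{g_i^{(y)}}(a)|^2\approx 2/\pi^2$, so this route certifies only $\sqrt{2^{m+1}}/\pi$, which already exceeds $1$ for $m\ge 3$ (even though the true $\eta$ of that partition is $2/\pi$). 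The correct statement is the lemma of \cite{benhamouda2018local} that arc partitions are extremal, yielding a constant of the form $2^m\sin(\pi/2^m)/\bigl(p\sin(\pi/p)\bigr)<1$, and its proof is a geometric rearrangement argument, not Cauchy--Schwarz. Second, and more seriously, your final counting cannot deliver the theorem as stated. Bounding the sum by $|L^\perp|\cdot\rho^{d^\perp}=p^{\,n-k+1}\rho^{\,k-\theta}$ and ``choosing $\alpha$ close to $1$'' fails because Shamir with $n$ players forces $p>n$, so $\log p\to\infty$ with $n$; for any fixed constants $\rho<1$ and $\alpha<1$, the factor $p^{(1-\alpha)n}=2^{(1-\alpha)n\log p}$ eventually swamps $\rho^{\alpha n}$. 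Your argument therefore only gives a threshold ratio $\alpha=1-O\bigl(\log(1/\rho)/\log p\bigr)$ that tends to $1$ with the field size, whereas the statement requires a single $\alpha<1$ depending only on $m$ and $\theta$. Closing this gap requires the finer bookkeeping of the cited work, which stratifies $L^\perp$ by the weight distribution of the dual MDS code and exploits frequency-averaged (Parseval-type) savings of the leakage Fourier mass over the $p$ frequencies, rather than a single worst-case constant per nonzero coordinate.
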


We are interested in the case where $\theta = 0$, i.e., the adversary does not have full access to any share. 
We compare this theorem with \Cref{thm:full-length-rs}. Recall that in \Cref{thm:full-length-rs}, we show that any node in the $[p,k]_p$ RS code (the full length RS code) admits an asymptotically optimal repair using $d>k$ helper nodes.

In the context of secret sharing, \Cref{thm:full-length-rs} basically says that there is a strategy for an adversary to learn the secret completely by applying some specific leakage functions to some specific shares. More formally, 
we show that there is an adversary that can pick a set of $d$ shares $\alpha_1, \ldots, \alpha_{d}$ and $d$ leakage functions $\tau_{\alpha_1}, \ldots, \tau_{\alpha_{d}}$ that output $(1/(d-k+1)) \log (p) + O_{k,d}(1)$ bits such that when applied to $f(\alpha_{1}), \ldots, f(\alpha_{{d}})$, the adversary completely learns the secret. In fact, as was prove in \Cref{thm:full-length-rs}, the adversary has $\Omega(p)$ distinct sets of $d$ shares from which he can completely learn the secret.

Note that it does not contradict \Cref{thm:secret-sharing-leakage-cons-rate} since we are in a totally different parameter regime. Indeed, the first observation is that in \Cref{thm:secret-sharing-leakage-cons-rate}, the leakage functions output a constant number of bits, while in our settings, we output a constant fraction of bits.
Secondly, in their framework, the code's rate is constant, and thus $k$ grows with $n$ while in our work, it is a constant compared to $n = p$.

The second result gives another parameter regime for which Shamir's secret sharing scheme is leakage resilient.
\begin{thm} \label{thm:secret-sharing-leakage-high-rate} \cite[Corollary 4.13]{benhamouda2018local}
	Let $\theta = O(1)$. For sufficiently large $n$, where $n < p \leq 2n$ and $m = \floor{(\log (p))/4}$, the Shamir's secret sharing scheme with $k > n-n^{1/4}$ is $(\theta, m, \varepsilon)$-local leakage resilient where $\varepsilon = 2^{-\Omega(n)}$.
\end{thm}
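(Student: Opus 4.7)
The plan is Fourier analysis over $\Fp$. First, under Shamir's scheme the share vector associated with secret $s$ is uniformly distributed on the affine coset $\vec{c}_s + V_0$, where $V_0\subseteq \Fp^n$ is the $(k-1)$-dimensional Reed-Solomon subspace corresponding to polynomials of degree less than $k$ that vanish at $0$. Conditioning on the $\theta=O(1)$ fully revealed shares reduces to the case $\theta=0$ while preserving the MDS structure on the remaining $n-\theta$ coordinates, so I will assume $\theta=0$ henceforth. I would then expand each indicator $\mathbf{1}_{\tau_i^{-1}(v_i)}$ in the additive characters of $\Fp$, so that with $\omega$ a primitive $p$-th root of unity,
\[
\Pr[\Leak(s)=\vec{v}]-\Pr[\Leak(s')=\vec{v}] \;=\; \sum_{\vec{\beta}\in V_0^{\perp}\setminus\{0\}}\bigl(\omega^{\langle\vec{\beta},\vec{c}_s\rangle}-\omega^{\langle\vec{\beta},\vec{c}_{s'}\rangle}\bigr)\prod_{i=1}^{n}\widehat{\mathbf{1}_{\tau_i^{-1}(v_i)}}(\beta_i),
\]
the $\vec{\beta}=0$ term being identical for $s$ and $s'$.

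Second, I would exploit two structural ingredients: (i) $V_0$ is MDS, so $\dim V_0^{\perp}\leq n-k+1\leq n^{1/4}+1$ and every nonzero element of $V_0^{\perp}$ has Hamming weight at least $k\geq n-n^{1/4}$; and (ii) Parseval gives $\sum_{\beta}|\widehat{\mathbf{1}_{\tau_i^{-1}(v_i)}}(\beta)|^2=|\tau_i^{-1}(v_i)|/p$ while the image-size bound on $\tau_i$ gives $\sum_{v_i}|\tau_i^{-1}(v_i)|=p$ with at most $2^m$ nonzero summands. Summing the statistical distance over $\vec{v}$, interchanging the $\vec{v}$- and $\vec{\beta}$-sums so that $\vec{v}$ is innermost, and applying Cauchy-Schwarz in each nonzero coordinate yields
\[
\textup{SD}\bigl(\Leak(s),\Leak(s')\bigr) \;\leq\; 2\sum_{\vec{\beta}\in V_0^{\perp}\setminus\{0\}}\,\prod_{i\,:\,\beta_i\neq 0}\sqrt{\tfrac{2^{m}}{p}}.
\]

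Third, with $m=\lfloor\log(p)/4\rfloor$ each factor is at most $p^{-1/4}$; combining with the weight lower bound $n-n^{1/4}$ and $|V_0^{\perp}|\leq p^{n^{1/4}+1}$,
\[
\textup{SD} \;\leq\; 2\,p^{n^{1/4}+1}\cdot p^{-(n-n^{1/4})/4} \;=\; 2^{-\Omega(n\log p)} \;=\; 2^{-\Omega(n)},
\]
since $p>n$, which is the claimed bound.

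The main obstacle is the ordering of operations in the Fourier expansion. A naive triangle inequality applied before summing over $\vec{v}$ would lose a multiplicative factor of $2^{mn}$ that destroys the bound entirely; the key is to swap summations so that $\vec{v}$ is innermost and then deploy Cauchy-Schwarz so each active coordinate contributes $\sqrt{2^{m}/p}$ rather than $2^{m}/p$. Once this is arranged, the tiny codimension $n-k+1\leq n^{1/4}+1$ bounds the number of Fourier terms by $p^{O(n^{1/4})}$ while the MDS weight bound forces each term to decay like $p^{-\Omega(n)}$, and the two comfortably combine. The role of $\theta=O(1)$ and of $p\leq 2n$ is just to absorb lower-order losses in these estimates.
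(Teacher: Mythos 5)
First, note that the paper itself gives no proof of this statement: it is quoted verbatim (as \cite[Corollary 4.13]{benhamouda2018local}) and used only for comparison with the repair results, so the benchmark for your attempt is the Fourier-analytic argument of Benhamouda--Degwekar--Ishai--Rabin, whose overall architecture (Poisson summation over the dual of the Shamir/GRS subspace, the MDS weight bound $\mathrm{wt}(\vec\beta)\ge k$ for nonzero dual codewords, the $p^{\,n-k+1}$ count coming from the tiny codimension, and reduction of the $\theta=O(1)$ revealed shares) you have reconstructed correctly.

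However, there is a genuine gap at the crucial per-coordinate estimate. Your chain ``Parseval $+$ Cauchy--Schwarz $\Rightarrow$ each active coordinate contributes $\sqrt{2^m/p}$'' needs, for a \emph{fixed} $\beta\neq 0$, the bound $\sum_{v}\bigl|\widehat{\mathbf{1}_{\tau^{-1}(v)}}(\beta)\bigr|^2\le 1/p$; but Parseval controls the sum over $\beta$ for a fixed set $\tau^{-1}(v)$, not the sum over $v$ for a fixed $\beta$, and the inequality you need is false. Take $\tau$ whose fibers are the $2^m$ arithmetic progressions of length about $p/2^m$ with step $\beta^{-1}$ (exactly the partitions this paper uses for its repair schemes): then $\sum_{v}\bigl|\widehat{\mathbf{1}_{\tau^{-1}(v)}}(\beta)\bigr|\approx \frac{2^m}{\pi}\sin\!\left(\pi/2^m\right)$, a constant just below $1$, vastly larger than your claimed $\sqrt{2^m/p}=p^{-3/8}$ when $m=\floor{\log(p)/4}$. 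The correct worst-case factor is precisely $c_m=\frac{2^m}{\pi}\sin(\pi/2^m)\approx 1-\Theta(4^{-m})=1-\Theta(1/\sqrt{p})$, and this is the whole point of the hypotheses $k>n-n^{1/4}$, $m=\floor{\log(p)/4}$, $n<p\le 2n$: one must beat the $p^{\,n-k+1}\le p^{\,n^{1/4}+1}$ union bound using only $c_m^{\,k}=\exp(-\Theta(k/\sqrt{p}))$, which forces the near-trivial codimension and near-full weight. A warning sign is that your bound would prove strong leakage resilience with $\lfloor\log(p)/4\rfloor$-bit leakage already at any constant rate above roughly $8/11$, a far stronger claim than the theorem, whereas with the true factor $c_m$ the same computation yields a subexponential (of order $2^{-\Omega(\sqrt{n})}$-type) rather than the advertised decay, so the final parameter bookkeeping also needs to be redone against the actual statement in \cite{benhamouda2018local}.
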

This parameter regime corresponds to a very high rate code. They show that in this case, the adversary can read even a quarter of the bits of every share and yet learn nothing about the secret. 


We note that by puncturing the code in \Cref{thm:full-length-rs}, one can simply prove that there exists an $[k+4,k]_p$ RS code such that a single node (the last node) can be repaired by downloading $(1/4)\log(p) + O_k(1)$ bits from each of the remaining $k+3$ nodes.
This of course does not contradict \Cref{thm:secret-sharing-leakage-high-rate} since in our results, $p$ is very large compared to $n = k + 3$ while in \Cref{thm:secret-sharing-leakage-high-rate}, it is required that  $n<p\leq 2n$.

\section{Concluding remarks and open problems}
\label{sec:concluding remarks}
The study presented in this paper was inspired by the interesting open question raised in \cite{guruswami2017repairing} regarding whether nonlinear repair schemes exist, and if so,  can they  outperform linear schemes. 
Since for codes over prime fields, any linear repair scheme is the trivial scheme, any efficient repair scheme, i.e., a repair scheme that outperforms the trivial one, must be nonlinear. Hence, our primary  focus was on constructing  repair schemes of  RS codes  over prime fields.

We were able to exhibit the first nonlinear repair scheme of RS codes over prime fields, which is also the only known example of a nonlinear repair scheme of any code. As a byproduct, we showed that nonlinear ones can outperform linear repair schemes. 
Furthermore, some of the repair schemes are  asymptotically optimal, as the alphabet size tends to infinity.  
Lastly, we also improved the cut-set bound for RS codes over prime fields  and discussed connections to leakage-resilient Shamir's Secret Sharing over prime fields. 

We end this discussion by mentioning several open questions that, in our opinion, could further improve the study of nonlinear repair schemes.

\begin{enumerate}
	\item Is it possible to apply our approach of using arithmetic progressions to obtain RS codes over prime fields that are asymptotically MSR codes for any positive integers $k < d < n$? Note that it is unknown if such codes exist, although they likely do. Moreover, is it possible to  generalize the  approach to other codes over prime fields?
	
	\item The work of \cite{benhamouda2018local} showed that for some parameters, an adversary learns almost nothing on the secret in SSS, whereas we showed in this paper the other extreme case. Namely, for some other parameters, the adversary learns the entire secret.  Hence, it is  interesting to fill in the gaps and improve our understanding of SSS performance under the remaining parameters regime. In particular, better  understand the dependence between the field size $p$ and the number  of bits $m$ leaked to  the adversary to learn something or the whole secret. For example, what can be said for $m = O(\log(p))$, $k = O(n)$, and $p > 2n$. Notice that any new result would automatically have implications on the other
	model of repairing RS codes.
	
	\item It is known  \cite{tamo2017optimal,alrabiah2019exponential}   that linear MSR code with a linear repair exists only over alphabet size, which is at least doubly exponential in the code dimension. Can this result be generalized to nonlinear repair schemes? In particular, does the field size have to be large for efficient repair schemes to exist over prime fields? 
\end{enumerate}

\bibliographystyle{alpha}
\bibliography{repair_RSp}
\section{Appendix}
\subsection{Proof of 
	\label{proof-of-the-lemma}
	\Cref{lem:lcm-lemma}}
\begin{proof}
	We prove it by induction on $s$. If $s=2$ it holds that 
	\[
	\lcm(a_1, a_2) = \frac{a_1 \cdot a_2}{\gcd (a_1, a_2)} \;.
	\] 
	Assume that the claim holds for $s-1$. It holds that
	\begin{align}
	\lcm(a_1, \ldots, a_s) &= \lcm(a_1, \lcm(a_2, \ldots, a_s)) \nonumber \\
	& = \frac{a_1 \cdot \lcm(a_2, \ldots, a_s)}{\gcd(\ a_1, \lcm(a_2, \ldots, a_s))} \nonumber \\
	& \geq \frac{a_1 \cdots a_s}{\gcd( a_1, a_2 \cdots a_s) \cdot \prod_{2\leq i < j\leq s} \gcd( a_i, a_j)} \label{eq:hypo} \\
	& \geq \frac{a_1\cdots a_s}{\prod_{i=2}^{s} \gcd( a_1, a_i) \prod_{2\leq i < j\leq s} \gcd( a_i, a_j)} \label{eq:mult-gcd} \\
	& = \frac{ a_1 \cdots a_s}{\prod_{1\leq i < j\leq s} \gcd ( a_i, a_j)} \nonumber \;.
	\end{align}
	Inequality \eqref{eq:hypo} follows from the induction hypothesis and from the fact that $\lcm( a_2, \ldots, a_s) \mid a_2 \cdots a_s$ which implies that 
	\[
	\gcd ( a_1, \lcm( a_2, \ldots, a_s)) \leq \gcd( a_1, a_2 \cdots a_s) \;.
	\]
	Inequality \eqref{eq:mult-gcd} follow from \Cref{clm:gcd-bound} which is stated and proved below.
\end{proof}

\begin{claim} \label{clm:gcd-bound}
	Let $b,a_1,\ldots, a_s$ be integers. It holds that 
	\[
	\gcd(b, a_1\cdots a_s) \leq \gcd(b,a_1) \cdot \gcd(b, a_2) \cdots \gcd(b, a_s)
	\]
\end{claim}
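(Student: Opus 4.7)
The plan is to reduce the claim to a pointwise statement about $p$-adic valuations. For any prime $p$, let $v_p(\cdot)$ denote the usual $p$-adic valuation on the integers. Since $v_p$ is additive on products and since $v_p(\gcd(x,y))=\min(v_p(x),v_p(y))$, the claim is equivalent to showing, for every prime $p$, that
\[
\min\Bigl(v_p(b),\sum_{i=1}^s v_p(a_i)\Bigr)\;\leq\;\sum_{i=1}^s \min\bigl(v_p(b),v_p(a_i)\bigr).
\]
Writing $x:=v_p(b)\geq 0$ and $y_i:=v_p(a_i)\geq 0$, the whole claim boils down to the elementary inequality
\[
\min\Bigl(x,\sum_{i=1}^s y_i\Bigr)\;\leq\;\sum_{i=1}^s \min(x,y_i)
\]
for non-negative integers $x,y_1,\ldots,y_s$.

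To prove that inequality I would split into two cases. If $\sum_i y_i\leq x$, then every $y_i\leq x$, so $\min(x,y_i)=y_i$ and both sides equal $\sum_i y_i$. If instead $\sum_i y_i>x$, the left-hand side equals $x$, and for the right-hand side there are two sub-cases: if some $y_j\geq x$ then the single term $\min(x,y_j)=x$ already suffices and the remaining non-negative terms can only help; if every $y_i<x$ then $\min(x,y_i)=y_i$ for all $i$ and the sum equals $\sum_i y_i>x$. In either sub-case the right-hand side is at least $x$.

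Alternatively, one can give an induction on $s$ whose base case $s=2$, namely $\gcd(b,a_1a_2)\leq \gcd(b,a_1)\gcd(b,a_2)$, is exactly the valuation inequality above with two summands; the inductive step follows by grouping $a_2\cdots a_s$ and applying the $s=2$ case. I expect no substantial obstacle: the content of the claim is entirely local at each prime, and once one passes to $p$-adic valuations the inequality is a short case analysis.
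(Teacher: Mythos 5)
Your argument is correct, but it takes a different route from the paper. The paper proves the claim by induction on $s$: it asserts (without detailed proof) that $\gcd(b,a_1a_2)$ divides $\gcd(b,a_1)\cdot\gcd(b,a_2)$, and then chains divisibilities, $\gcd(b,a_1\cdots a_s)\mid \gcd(b,a_1\cdots a_{s-1})\gcd(b,a_s)\mid \gcd(b,a_1)\cdots\gcd(b,a_s)$. You instead localize at each prime $p$ and reduce everything to the elementary inequality $\min\bigl(x,\sum_i y_i\bigr)\leq \sum_i\min(x,y_i)$ for nonnegative integers, verified by a short case analysis. Both are valid; your valuation argument is arguably more self-contained, since it simultaneously handles all $s$ without induction and in fact establishes the stronger divisibility statement $\gcd(b,a_1\cdots a_s)\mid \gcd(b,a_1)\cdots\gcd(b,a_s)$ (the per-prime inequality of valuations is exactly divisibility), which is the fact the paper's base case leaves as ``easy to check.'' The paper's induction, on the other hand, is shorter on the page and avoids introducing valuations. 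One small housekeeping point common to both proofs: the statement is phrased for arbitrary integers, so one should either restrict to positive integers (which is how the claim is used, on absolute values of differences) or note the standard conventions $\gcd\geq 0$ and $v_p(0)=\infty$ so that zero and signs cause no trouble.
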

\begin{proof}
	It is easy to check that $\gcd(b, a_1a_2)$ divides the product $\gcd(b, a_1) \cdot \gcd(b,a_2)$. 
	Thus, by induction on $s$, we get that $\gcd(b,a_1\cdots a_s) | \gcd(b, a_1 \cdots a_{s-1}) \gcd(b, a_s) | \gcd(b,a_1) \cdots \gcd(b, a_s)$. 
\end{proof}

\subsection{Repairing $\alpha_i$ for $i\in \{1,2,3\}$} \label{sec:repairing-rest}
Recall the four evaluation points
\[
\alpha_1 = 0, \alpha_2 = -1, \alpha_3 = \frac{p-1}{2}, \alpha_4 = -(2t + 1)\;,
\]
where we assume that $t = \floor{\sqrt{p}/5} $.
In the following, we show that Proposition \ref{con:repair-condition} holds also for all the remaining symbols.

Assume that we wish to repair the $i$th node for some $i\in \{1,2,3\}$ and define $\gamma_j = \alpha_j - \alpha_i$ for every $j\in [4]\setminus \{i\}$. By Proposition \ref{con:repair-condition}, the repair of $\alpha_i$ succeeds if for any polynomial $f$ of degree at most one, that satisfies $f(\alpha_j)\in \gamma_j \cdot [-t,t]$ for $j\in [4]\setminus \{ i \}$, it holds that $f(\alpha_i) = 0$. Let $f$ be such a polynomial, i.e., $\deg(f)\leq 1$ and $f(\alpha_j)\in \gamma_j\cdot [-t,t]$ for $j=[4] \setminus \{i \}$, write $f(\alpha_4)=m\cdot \gamma_4$ for some $m\in [-t,t]$, and consider the polynomial $\hat{f}:=m(x-\alpha_i)$. We will show that $f(\alpha_i) = \hat{f}(\alpha_i) = 0$, as needed.  

Define $g := f - \hat{f}$ and denote $\{k, \ell \} = [3] \setminus \{ i \}$. It holds that 
\begin{align*}
g(\alpha_k) &\in \gamma_k \cdot [-2t,2t] \\
g(\alpha_{\ell}) &\in \gamma_{\ell} \cdot [-2t,2t] \\
g(\alpha_4) &= 0 \\
g(\alpha_i) &= f(\alpha_i) \;. 
\end{align*}
Thus, $g(x) = s(x - \alpha_4)$ and calculating, we get
\begin{align*}
s &\in \frac{\alpha_k - \alpha_i}{\alpha_k - \alpha_4} [-2t,2t] \\
s & \in \frac{\alpha_{\ell} - \alpha_i}{\alpha_{\ell} - \alpha_4} [-2t,2t] \;.
\end{align*}

We will show that if a certain condition is satisfied, then there are no $a, b\in [-2t,2t]\setminus\{ 0 \}$ such that
\begin{equation} \label{eq:con-2-4-constraint-sets}
(\alpha_{\ell} - \alpha_4)(\alpha_{\ell} - \alpha_i)^{-1} \cdot a = (\alpha_k - \alpha_4) (\alpha_k - \alpha_i)^{-1} \cdot b
\end{equation}
which implies that $s = 0$, as needed.

Assume that we take a realization of $\mathbb{F}_p$ as all the integers whose absolute value are less than $p/2$, i.e., $F_p=\{0,\pm 1,\pm 2,\cdots, \pm \frac{p-1}{2}\}$. Assume that the absolute value of the products $(\alpha_k - \alpha_4) (\alpha_k - \alpha_i)^{-1} \cdot 2t$ and $(\alpha_{\ell} - \alpha_4)(\alpha_{\ell} - \alpha_i)^{-1} \cdot 2t$ are less than $p/2$. Hence, in such a case the calculation in equation \eqref{eq:con-2-4-constraint-sets} holds over $\mathbb{Z}$.

Lastly, if the $\lcm$ of $|(\alpha_k - \alpha_4) (\alpha_k - \alpha_i)^{-1}|$ and $|(\alpha_{\ell} - \alpha_4)(\alpha_{\ell} - \alpha_i)^{-1}|$ is greater than $2t\cdot \min(|(\alpha_k - \alpha_4) (\alpha_k - \alpha_i)^{-1}|, |(\alpha_{\ell} - \alpha_4)(\alpha_{\ell} - \alpha_i)^{-1}|) $ (again we view them as integers), then it is easy to verify that there are no $a$ and $b$ in $[-2t,2t]\setminus \{ 0 \}$ such that equation \eqref{eq:con-2-4-constraint-sets} holds. Therefore, it must be that $a=b=0$ which implies that $s=0$, and we are done.  

Now, note that there is symmetry between $k$ and $\ell$, thus we check only the following three options:
\begin{itemize}
	\item  $k = 1, \ell = 2, i = 3$. In this case, we get that 
	$|(\alpha_k - \alpha_4) (\alpha_k - \alpha_i)^{-1}| = 4t + 2$,  $|(\alpha_{\ell} - \alpha_4)(\alpha_{\ell} - \alpha_i)^{-1}| = 4t$, and $\lcm(4t +2, 4t) = 4t  \cdot (2t+1)$.
	
	\item  $i = 1, \ell = 2, k = 3$. In this case, we get that
	$|(\alpha_k - \alpha_4) (\alpha_k - \alpha_i)^{-1}| = 4t - 1$,  $|(\alpha_{\ell} - \alpha_4)(\alpha_{\ell} - \alpha_i)^{-1}| = 2t$, and $\lcm(4t -1, 2t) = 2t\cdot (4t+1)$. 
	
	\item  $k = 1, i = 2, \ell = 3$. In this case, we get that
	$|(\alpha_k - \alpha_4) (\alpha_k - \alpha_i)^{-1}| = 2t + 1$,  $|(\alpha_{\ell} - \alpha_4)(\alpha_{\ell} - \alpha_i)^{-1}| = 4t + 1$, and $\lcm(4t + 1, 2t+1) = (4t + 1) \cdot (2t+1)$. 
	
\end{itemize}
In all cases, one can verify that $|(\alpha_k - \alpha_4) (\alpha_k - \alpha_i)^{-1} \cdot 2t|,  |(\alpha_{\ell} - \alpha_4)(\alpha_{\ell} - \alpha_i)^{-1} \cdot 2t| < p/2$. Furthermore, in all cases, it holds that 
the $\lcm$ of $|(\alpha_k - \alpha_4) (\alpha_k - \alpha_i)^{-1}|$ and $|(\alpha_{\ell} - \alpha_4)(\alpha_{\ell} - \alpha_i)^{-1}|$ is strictly greater than $2t\cdot \min(|(\alpha_k - \alpha_4) (\alpha_k - \alpha_i)^{-1}|, |(\alpha_{\ell} - \alpha_4)(\alpha_{\ell} - \alpha_i)^{-1}|) $. We conclude that $\alpha_i$ can be repaired with the desired bandwidth.
\end{document}